\documentclass[preprint,12pt]{elsarticle}

\usepackage{amssymb}
\usepackage{amsmath}
\usepackage{amsthm}
\usepackage{mathtools}

\usepackage[noend]{algpseudocode}
\usepackage{algorithm}

\usepackage{listings}
\usepackage{xcolor}
\usepackage[normalem]{ulem}
\usepackage{soul}
\usepackage{tikz}
\usetikzlibrary{trees, positioning, arrows.meta, calc, chains}

\definecolor{codegreen}{rgb}{0,0.6,0}
\definecolor{codegray}{rgb}{0.5,0.5,0.5}
\definecolor{codepurple}{rgb}{0.58,0,0.82}
\definecolor{backcolour}{rgb}{0.95,0.95,0.92}
\lstdefinestyle{mystyle}{
  backgroundcolor=\color{backcolour},
  commentstyle=\color{codegreen},
  keywordstyle=\color{magenta},
  numberstyle=\tiny\color{codegray},
  stringstyle=\color{codepurple},
  basicstyle=\ttfamily\footnotesize,
  breaklines=true,
  captionpos=b,
  numbers=left,
  numbersep=5pt,
  tabsize=2
}

\newtheorem{theorem}{Theorem}
\newtheorem{lemma}[theorem]{Lemma}
\newtheorem{example}[theorem]{Example}

\journal{Discrete Applied Mathematics}

\begin{document}

\begin{frontmatter}

\title{Generating cyclic pivot Gray codes for well-ordered $k$-degenerate graphs in constant amortized time}

\author[mpu]{Lei Dong}
\author[mpu]{Dennis Wong\corref{cor1}}
\author[mpu]{Bowie Liu}
\author[mpu]{Rui Bao}
\author[mpu]{Lin Chen}
\author[mpu]{Chan-Tong Lam}
\author[mpu]{Sio-Kei Im}

\address[mpu]{Macao Polytechnic University, Macao, China\\
\textit{email:} lei.dong@mpu.edu.mo, cwong@mpu.edu.mo, bowen.liu@mpu.edu.mo,
P2321323@mpu.edu.mo, lchen@mpu.edu.mo, ctlam@mpu.edu.mo, marcusim@mpu.edu.mo}

\cortext[cor1]{Corresponding author. Email: cwong@mpu.edu.mo}

\begin{abstract}
A graph $G$ is $k$-degenerate if there exists an ordering $v_1, v_2, \dots, v_n$ of its vertices such that each vertex $v_i$ has at most $k$ neighbors $v_j$ in $G$ with $j < i$.
A well-ordered $k$-degenerate graph is a labeled graph on the vertex set $\{1, 2, \dots, n\}$ in which every vertex $i$ has at most $k$ neighbors among $1, 2, \dots, i-1$.
We present the first simple algorithms that generate, rank, and unrank cyclic pivot Gray codes for well-ordered $k$-degenerate graphs, where consecutive graphs differ by the addition, removal, or pivoting of a single edge. Our algorithm generates each well-ordered $k$-degenerate graph in constant amortized time per graph, using $O(n^2)$ space, while ranking and unranking take $O(n^2)$ time and $O(n^2)$ space.
\end{abstract}

\begin{highlights}
\item First constant-amortized-time pivot Gray codes for well-ordered $k$-degenerate graphs.
\item Generation in $O(n^2)$ space; consecutive graphs differ by one edge insertion, deletion, or pivot.
\item Exact closed-form enumeration of well-ordered $k$-degenerate graphs.
\item Ranking and unranking in $O(n^2)$ time and $O(n^2)$ space.
\end{highlights}

\begin{keyword}
well-ordered $k$-degenerate graphs \sep cyclic pivot Gray codes \sep constant amortized time \sep enumeration \sep ranking and unranking
\MSC 05C30 \sep 05C85
\end{keyword}

\end{frontmatter}

\section{Introduction}

A graph $G$ is \emph{$k$-degenerate} (also known as \emph{$k$-inductive}) if there exists an ordering $v_1, v_2, \dots, v_n$ of its vertices such that each vertex $v_i$ has at most $k$ neighbors $v_j$ in $G$ with $j < i$.
Such an ordering is called an \emph{elimination ordering} or a \emph{degeneracy ordering}.
Equivalently, a graph $G$ is $k$-degenerate if and only if every subgraph of $G$ contains at least one vertex of degree at most $k$.
For example, Figure~\ref{k,n=4,k=2} provides several $2$-degenerate graphs on $n=4$ vertices, with an elimination ordering presented below each graph in the form of a tuple.
As an illustration, the elimination ordering for the rightmost graph in the first row of Figure~\ref{k,n=4,k=2} is $(2,4,1,3)$.
This corresponds to removing the vertices in reverse order (first $3$, then $1$, then $4$, and finally $2$), ensuring that at each step the removed vertex has degree at most two.

\begin{figure}[htbp]
\centering
\begin{tikzpicture}[
    scale=1.0,
    v/.style={circle, fill=black, inner sep=2pt},
    lab/.style={font=\scriptsize}
]

\begin{scope}[xshift=0cm, yshift=2.2cm]
    \node[v] (v1) at (0,1) {};
    \node[v] (v2) at (1,1) {};
    \node[v] (v3) at (1,0) {};
    \node[v] (v4) at (0,0) {};
    \draw (v1)--(v3);
    \draw (v1)--(v4);
    \draw (v3)--(v4);
    \node[lab, above left]  at (v1) {1};
    \node[lab, above right] at (v2) {2};
    \node[lab, below right] at (v3) {3};
    \node[lab, below left]  at (v4) {4};
    \node[lab, below] at (0.5,-0.45) {$(1,2,3,4)$};
\end{scope}

\begin{scope}[xshift=3.0cm, yshift=2.2cm]
    \node[v] (v1) at (0,1) {};
    \node[v] (v2) at (1,1) {};
    \node[v] (v3) at (1,0) {};
    \node[v] (v4) at (0,0) {};
    \draw (v1)--(v3);
    \draw (v2)--(v4);
    \node[lab, above left]  at (v1) {1};
    \node[lab, above right] at (v2) {3};
    \node[lab, below right] at (v3) {2};
    \node[lab, below left]  at (v4) {4};
    \node[lab, below] at (0.5,-0.45) {$(1,3,2,4)$};
\end{scope}

\begin{scope}[xshift=6.0cm, yshift=2.2cm]
    \node[v] (v1) at (0,1) {};
    \node[v] (v2) at (1,1) {};
    \node[v] (v3) at (1,0) {};
    \node[v] (v4) at (0,0) {};
    \draw (v1)--(v3);
    \draw (v2)--(v4);
    \draw (v3)--(v4);
    \node[lab, above left]  at (v1) {2};
    \node[lab, above right] at (v2) {1};
    \node[lab, below right] at (v3) {4};
    \node[lab, below left]  at (v4) {3};
    \node[lab, below] at (0.5,-0.45) {$(2,1,4,3)$};
\end{scope}

\begin{scope}[xshift=9.0cm, yshift=2.2cm]
    \node[v] (v1) at (0,1) {};
    \node[v] (v2) at (1,1) {};
    \node[v] (v3) at (1,0) {};
    \node[v] (v4) at (0,0) {};
    \draw (v1)--(v3);
    \draw (v1)--(v4);
    \node[lab, above left]  at (v1) {2};
    \node[lab, above right] at (v2) {4};
    \node[lab, below right] at (v3) {1};
    \node[lab, below left]  at (v4) {3};
    \node[lab, below] at (0.5,-0.45) {$(2,4,1,3)$};
\end{scope}

\begin{scope}[xshift=0cm, yshift=-0.5cm]
    \node[v] (v1) at (0,1) {};
    \node[v] (v2) at (1,1) {};
    \node[v] (v3) at (1,0) {};
    \node[v] (v4) at (0,0) {};
    \draw (v1)--(v3);
    \draw (v1)--(v4);
    \draw (v2)--(v4);
    \node[lab, above left]  at (v1) {3};
    \node[lab, above right] at (v2) {1};
    \node[lab, below right] at (v3) {2};
    \node[lab, below left]  at (v4) {4};
    \node[lab, below] at (0.5,-0.45) {$(3,1,2,4)$};
\end{scope}

\begin{scope}[xshift=3.0cm, yshift=-0.5cm]
    \node[v] (v1) at (0,1) {};
    \node[v] (v2) at (1,1) {};
    \node[v] (v3) at (1,0) {};
    \node[v] (v4) at (0,0) {};
    \draw (v1)--(v3);
    \draw (v1)--(v4);
    \draw (v2)--(v3);
    \node[lab, above left]  at (v1) {3};
    \node[lab, above right] at (v2) {4};
    \node[lab, below right] at (v3) {1};
    \node[lab, below left]  at (v4) {2};
    \node[lab, below] at (0.5,-0.45) {$(3,4,1,2)$};
\end{scope}

\begin{scope}[xshift=6.0cm, yshift=-0.5cm]
    \node[v] (v1) at (0,1) {};
    \node[v] (v2) at (1,1) {};
    \node[v] (v3) at (1,0) {};
    \node[v] (v4) at (0,0) {};
    \draw (v1)--(v3);
    \node[lab, above left]  at (v1) {3};
    \node[lab, above right] at (v2) {4};
    \node[lab, below right] at (v3) {2};
    \node[lab, below left]  at (v4) {1};
    \node[lab, below] at (0.5,-0.45) {$(3,4,2,1)$};
\end{scope}

\begin{scope}[xshift=9.0cm, yshift=-0.5cm]
    \node[v] (v1) at (0,1) {};
    \node[v] (v2) at (1,1) {};
    \node[v] (v3) at (1,0) {};
    \node[v] (v4) at (0,0) {};
    \draw (v1)--(v3);
    \draw (v2)--(v3);
    \node[lab, above left]  at (v1) {4};
    \node[lab, above right] at (v2) {1};
    \node[lab, below right] at (v3) {3};
    \node[lab, below left]  at (v4) {2};
    \node[lab, below] at (0.5,-0.45) {$(4,1,3,2)$};
\end{scope}

\end{tikzpicture}
\caption{Eight examples of $2$-degenerate graphs on $4$ vertices.}
\label{k,n=4,k=2}
\end{figure}

The study of $k$-degenerate graphs has a long history and has traditionally attracted considerable interest from mathematicians and graph theorists~\cite{bickle2012structural,chlebikova2007d,filakova1997note,lick1970k,mihok2001minimal}.
In particular, $1$-degenerate graphs are precisely forests: every non-empty subgraph has a leaf or an isolated vertex.
$2$-degenerate graphs include outerplanar graphs and series-parallel graphs; every outerplanar graph admits an embedding in which some vertex has degree at most $2$.
$k$-trees are maximal $k$-degenerate graphs, constructed by starting with a $(k+1)$-clique and iteratively adding a new vertex adjacent to all vertices of an existing $k$-clique.
$k$-degenerate graphs exhibit several useful structural and extremal properties. They are $(k+1)$-colorable, which follows from a greedy coloring along a degeneracy ordering; in particular, the chromatic number of such a graph is at most its degeneracy plus one. The size of the largest clique in a $k$-degenerate graph is at most $k+1$.
Degenerate graphs are closely related to $k$-core decomposition and its associated notions of coreness and degeneracy. These concepts have found numerous applications in social network analysis, bioinformatics, network visualization and resilience analysis, ecology, epidemic modeling (e.g., bootstrap percolation), and distributed computing. For a more comprehensive survey of applications and recent developments in this area, see~\cite{malliaros2019core}.

Enumerating $k$-degenerate graphs, however, is challenging in both the unlabeled and labeled settings.
While the unlabeled case is already difficult due to the need to identify isomorphic copies,
the labeled case presents a further subtlety: a single graph may admit many distinct elimination orderings.
If each ordering is treated as a separate construction, the same graph may be generated multiple times, leading to overcounting.
For example, each graph in Figure~\ref{k,n=4,k=2} is paired with a valid elimination ordering shown below it, yet every one of these graphs also admits $(1,2,3,4)$ as a valid elimination ordering.

Bauer, Krug, and Wagner~\cite{bauer2010enumerating}
(elaborated further in Krug's dissertation~\cite{krugcombinatorial})
therefore considered a labeled variant of $k$-degenerate graphs, known as \emph{well-ordered $k$-degenerate graphs},
which bypasses the ordering ambiguity by fixing the vertex labels as the elimination ordering.
In these graphs, vertices are labeled $1, 2, \dots,n$, and each vertex $i$ is adjacent to at most $k$ vertices with larger labels; reversing the vertex order shows that this is equivalent to our definition, namely, a labeled graph on the vertex set $\{1, 2, \dots, n\}$ in which every vertex $i$ has at most $k$ neighbors among $1, 2, \dots, i-1$.
Bauer et al.~\cite{bauer2010enumerating} studied the enumeration and generation of well-ordered $k$-degenerate graphs. They showed that the number $D_n^{(k)}$ of well-ordered $k$-degenerate graphs on $n$ vertices satisfies
$D_n^{(k)}
=
\sum_{i=0}^{\min\{n-1,k\}}
\binom{n-1}{i}
D_{n-1}^{(k)}$.
For example, when $k=1$, the numbers of well-ordered $k$-degenerate graphs for $n=1, 2 ,\ldots$ are
$1, 2, 6, 24, 120, \ldots$, which form OEIS sequence A000142~\cite{oeis:A000142}. When $k=2$, the corresponding sequence begins
$1,2,8,56,616,9856,\ldots$, which is OEIS sequence A128814~\cite{oeis:A128814}.

One of the most important aspects of combinatorial generation is to list the instances of a combinatorial object so that consecutive instances differ by a specified \emph{closeness condition} involving a constant amount of change.
Lists of this type are called \emph{Gray codes}.
This terminology is due to the eponymous \emph{binary reflected Gray code} (BRGC) by Frank Gray, which orders the $2^n$ binary strings of length $n$ so that consecutive strings differ by one bit.
For example, when $n = 4$ the order is
\begin{center}
\small
0000, 1000, 1100, 0100, 0110, 1110, 1010, 0010,\\ 0011, 1011, 1111, 0111, 0101, 1101, 1001, 0001.
\end{center}
Variations that reverse the entire order or the individual strings are also commonly used
in practice and in the literature.
For more information about binary reflected Gray code and its variations, see~\cite{10.1007/978-3-030-85088-3_15,flipswap,Vajnovszki2007GrayCO}.
We note that the order above is \emph{cyclic} because the last and first strings also differ by the closeness condition, and this property holds for all $n$.
The BRGC listing is a \emph{1-Gray code} in which consecutive strings differ by one symbol change.

When the combinatorial instances are graphs, consecutive graphs differ by a specified closeness condition involving a constant amount of change---a local structural operation on the edge set or vertex set. A long-standing problem in combinatorial generation is to determine whether fundamental graph classes admit Gray codes and, if so, how to generate them efficiently.
A well-known special case of this problem was posed as an open problem by Knuth~\cite{knuth2011art} (Exercise~101 in Section~7.2.1.6 of \emph{The Art of Computer Programming}, Volume~4), with a difficulty rating of 46 out of 50: Is there a simple revolving-door way to list all $n^{n-2}$ spanning trees of the complete graph $K_n$? (The order produced by Algorithm~S is quite complicated.)
In this direction, pivot Gray codes have attracted considerable attention in recent years. Cameron, Grubb, and Sawada~\cite{cameron2024pivot} first introduced pivot Gray codes (also known as strong revolving-door Gray codes~\cite{knuth2011art}) for spanning trees of graphs, where consecutive spanning trees differ by an edge exchange whose removed and added edges share a common vertex. They constructed such a code for the spanning trees of fan graphs and further posed the open problem of whether similar pivot Gray code constructions exist for other graph classes, including the $n$-cube and the complete graph.
Subsequently, Behrooznia and M\"utze~\cite{behrooznia2024listing} established pivot-exchange Gray codes for spanning trees of outerplanar graphs. They observed that Knuth's open problem of whether the complete graph $K_n$ admits a ``nice'' Gray code listing of its spanning trees can be interpreted, in the spirit of the problem of Cameron et al., as asking whether there exists a pivot-exchange Gray code for the spanning trees of $K_n$. Alternatively, such a Gray code could provide a more fine-grained explanation of Cayley's formula $n^{n-2}$ for the number of spanning trees.
Liu et al.~\cite{liu2026generating} further extended pivot Gray codes to spanning trees of complete graphs and achieved constant amortized time generation, providing the simple revolving-door listing sought in Knuth's Exercise~101; their result also answers the interpretation proposed by Behrooznia and M\"utze and settles the complete-graph case of the open problem posed by Cameron et al.
More recently, Bonamy et al.~\cite{bonamy2026gray} established the existence of a pivot Gray code for arborescences of tournaments, showing that Gray codes also exist in directed settings.
Despite these advances, the existence of pivot Gray code constructions for broader graph families remains open in general. Bauer et al.~\cite{bauer2010enumerating} presented efficient algorithms for generating and enumerating well-ordered $k$-degenerate graphs. However, the resulting enumeration is given recursively rather than in closed form and does not exhibit the regularity characteristic of Gray code orderings.
Motivated by this, we introduce a generalized framework of pivot Gray codes for well-ordered $k$-degenerate graphs by extending the traditional pivot operation to allow a single edge insertion or deletion. We still refer to this extended operation as a pivot operation. A Gray code is \emph{cyclic} if the first and last graphs in the sequence differ by exactly one pivot operation. Within this framework, we construct cyclic Gray codes for the entire graph class. Our result extends pivot Gray code constructions from spanning trees to a broader class of graphs.

The rest of the paper is organized as follows. In Section~2, we present a recursive algorithm for generating well-ordered $k$-degenerate graphs in constant amortized time. Section~3 derives an exact enumeration formula for this class of graphs. In Section~4, we develop efficient ranking and unranking algorithms to determine the position of a given graph in the generation order and, conversely, to recover the graph corresponding to a given position.

\section{Pivot Gray Codes for Well-Ordered $k$-Degenerate Graphs}
\label{sec:gen}

In this section, we present the first cyclic pivot Gray code for well-ordered $k$-degenerate graphs, along with a simple recursive algorithm that generates it.
The recursion relies on the generation of Gray codes for binary strings whose weight (i.e., the number of 1s) is at most $k$.

This is achieved by generating the binary reflected Gray code (BRGC) for length-$n$ strings via the construction described in \cite{flipswap}; let $\text{BRGC}(n)$ denote the binary reflected Gray code listing of length $n$, defined recursively by $\text{BRGC}(1) = 0, 1$ and, for $n > 1$, $\text{BRGC}(n) = (\text{BRGC}(n-1) \cdot 0,\; \overline{\text{BRGC}}(n-1) \cdot 1)$, where $L \cdot x$ appends $x$ to each string of a listing $L$, and $\overline{L}$ reverses it. This is a \emph{global recursion}, in that it refers to the entire list $\text{BRGC}(n-1)$ as one unit. It is equivalent to the following \emph{local recursion}, which refers to the individual strings: if $\text{BRGC}(n-1) = b_1, b_2, \ldots, b_{2^{n-1}}$, then $\text{BRGC}(n) = 0 \cdot b_1,\; 1 \cdot b_1,\; 1 \cdot b_2,\; 0 \cdot b_2,\; \ldots,\; 0 \cdot b_{2^{n-1}-1},\; 1 \cdot b_{2^{n-1}-1},\; 1 \cdot b_{2^{n-1}},\; 0 \cdot b_{2^{n-1}}$, with $\text{BRGC}(1) = 0, 1$; that is, fix the bits one by one from the last to the first, and at each step alternate between appending $0$ then $1$ and $1$ then $0$ to successive strings. This equivalence between the global recursion and the local recursion is guaranteed by Williams~\cite{williams2013greedy}, whose \emph{global recursion} is $\mathrm{Gray}(1) = 0, 1$ and, for $n > 1$, $\mathrm{Gray}(n) = (0 \cdot \mathrm{Gray}(n-1),\; 1 \cdot \overline{\mathrm{Gray}}(n-1))$, and whose \emph{local recursion} is $\mathrm{Gray}(n) = b_1 \cdot 0,\; b_1 \cdot 1,\; b_2 \cdot 1,\; b_2 \cdot 0,\; \ldots,\; b_{2^{n-1}-1} \cdot 0,\; b_{2^{n-1}-1} \cdot 1,\; b_{2^{n-1}} \cdot 1,\; b_{2^{n-1}} \cdot 0$, with $\mathrm{Gray}(1) = 0, 1$; in his \emph{global recursion} the new bit is the first bit and, correspondingly, in his \emph{local recursion} the new bit is appended to the end and the bits are fixed from the first onwards, that is, in the opposite direction to ours, but the logic is identical. During the generation, any string whose weight exceeds $k$ is pruned. We call the resulting listing the \emph{pruned BRGC}. Because the set of strings with weight at most $k$ forms a flip-swap language \cite{flipswap}, by Theorem~2 therein the pruned BRGC is a 2-Gray code in which consecutive strings differ in at most two bits.

In fact, Algorithm~\textsc{PruneBRGC} can generate this Gray code directly. The recursion traverses the bits one by one, from the $n$-th bit to the first, so the maximum depth is $n$. While the current weight is smaller than $k$, the recursion keeps generating downward: it fixes the current bit to one of the two values and continues to the next bit, leaving the other value to be continued later; once the weight reaches $k$, the remaining bits are set to $0$ and the generation along this path terminates. Each branching point triggers a new recursive call, and every generated branch eventually produces a valid string, so the number of recursive calls is proportional to the number of generated strings. 
The execution involves no backtracking: the already determined suffix is never modified, and the algorithm maintains only one buffer of length $n$; per branching point it stores a constant number of integer parameters, namely its position, weight, and direction. When a branch is processed, the fixed suffix in the buffer is reused to continue generating the remaining bits.

This recursion follows the BRGC order, stops a path as soon as the weight reaches $k$, and below weight $k$ explores both values of the current bit; it therefore outputs a string exactly when its weight is at most $k$. Since the pruned BRGC is precisely the list of the strings of weight at most $k$ in the BRGC order, the two listings consist of the same strings in the same order, and the recursion generates exactly the pruned BRGC.

\begin{algorithm}[h]
\caption{Pruned BRGC for binary strings of length $n$ with weight at most $k$}
\label{alg:pruned-brgc}
\begin{algorithmic}[1]
\Function{PruneBRGC}{$n,k$}
    \State $b \gets 0^n$ \Comment{shared buffer}
    \State \Call{Gen}{$n-1,0,1$}
\EndFunction
\Function{Gen}{$i,w,d$} \Comment{$d\in\{1,-1\}$: order of the two branches}
    \If{$i < 0$ \textbf{or} $w = k$}
        \State \textbf{output} $b$ \Comment{remaining bits are all $0$}
        \State \Return
    \EndIf
    \State $e \gets 1$
    \For{$x \in \{0,1\}$ in the order given by $d$} 
        \State $b_i \gets x$
        \State \Call{Gen}{$i-1,\,w+x,\,e$}
        \State $e \gets -e$
    \EndFor
    \State $b_i \gets 0$ \Comment{restore the current bit}
\EndFunction
\end{algorithmic}
\end{algorithm}

\begin{lemma}
\label{lem:brgc-rec}
The algorithm \textsc{PruneBRGC} generates all binary strings of length $n$ with weight at most $k$ in constant amortized time per string and uses $O(n)$ space.
\end{lemma}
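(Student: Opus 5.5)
We view the execution of \textsc{Gen} as a recursion tree and argue in three steps: correctness (exactly the right strings, each once), a counting argument showing the number of calls is $O$ of the number of outputs, and a space bound.

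\textbf{Correctness.} We argue by induction on $i$ that a call \textsc{Gen}$(i,w,d)$ outputs each completion of the current suffix $b_{i+1}\cdots b_{n-1}$ exactly once, where $w$ is the weight of that suffix. The completions in question are the strings of total weight at most $k$ whose bits $b_0,\dots,b_i$ are free.
\begin{itemize}
\item If $w=k$, the only such completion is the all-zero prefix. The buffer holds $0$ in positions $0,\dots,i$, because every bit set in a deeper call is reset to $0$ on return (the last line of \textsc{Gen}).
\item If $i<0$, the string is complete and has weight $w\le k$.
\item Otherwise, the two iterations of the loop partition the completions according to the value of $b_i$. Setting $b_i=1$ keeps the weight at most $k$ because $w<k$.
\end{itemize}
Applied to the root call \textsc{Gen}$(n-1,0,1)$, this gives every string of length $n$ with weight at most $k$, each exactly once. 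The order of output (matching the pruned BRGC) is not part of the lemma, so the parameters $d,e$ can be ignored here.

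\textbf{Amortized time.} Each call does $O(1)$ work apart from its recursive calls; an output is charged at $O(1)$ under the usual convention, or we output only changes. So it suffices to bound the number of nodes of the recursion tree.
\begin{itemize}
\item Every internal node (a call with $i\ge 0$ and $w<k$) has exactly two children. Both branches are always taken, since $w+1\le k$.
\item Every leaf outputs exactly one string.
\item A full binary tree with $L$ leaves has $2L-1$ nodes.
\end{itemize}
Hence the total number of calls is $2N-1$, where $N=\sum_{j\le k}\binom{n}{j}$ is the number of outputs, and the total time is $O(N)$, i.e.\ constant amortized per string.

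The one subtle point is the case $k=0$. There the root immediately outputs $0^n$, so $N=1$ and the bound holds trivially. Note that we never need $w>k$ pruning of dead branches, since the $w=k$ test stops before weight could exceed $k$. This is exactly what prevents unproductive subtrees, and it is the step most worth checking carefully.

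\textbf{Space.} The buffer $b$ has length $n$. The recursion depth is at most $n+1$, because $i$ decreases by one per level. Each frame stores $O(1)$ integers: $i,w,d,e,x$. The total space is therefore $O(n)$ words.
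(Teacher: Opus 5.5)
Your proof is correct and follows the same route as the paper: both count recursive calls against outputs, using that every branch ends in a valid string so there are no dead subtrees, charge $O(1)$ per call, and derive the $O(n)$ space bound from the length-$n$ buffer. Your version is tighter than the paper's in three places. The full-binary-tree identity gives exactly $2N-1$ calls rather than ``proportional''. You verify correctness via the buffer-reset invariant. You also explicitly count the $O(n)$ recursion stack, which the paper's space argument skips.
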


\begin{proof}
As standard for generation algorithms, the time required to output a string is not included in the analysis.

The recursion traverses the bits one by one, from the $n$-th bit to the first, so the maximum recursion depth is $n$. During the traversal, when $\mathrm{weight}<k$, the current path splits into two branches, one explored immediately and the other later. When $\mathrm{weight}=k$, no new branch is created; instead, all remaining bits are set to $0$, and the generation of branches along this path terminates.

Whenever the algorithm encounters a branching point, it makes a new recursive call. Thus, each newly created branch corresponds to one new recursive call. Moreover, every generated branch eventually produces a valid string, and there are no invalid branches. Therefore, the number of recursive calls grows proportionally with the number of branches, and the total number of recursive calls is proportional to the total number $M$ of valid strings. Since each recursive call requires constant time, the total running time is $O(M)$. Hence, 
amortizing the total time $O(M)$ over the $M$ valid strings gives constant amortized time per string.

The recursion maintains only the current string, of length $n$, so the total space is $O(n)$.
\end{proof}

We illustrate Algorithm~\textsc{PruneBRGC} on $n=4$ and $k=2$. The algorithm fixes the bits from the last to the first: when the current weight is smaller than $2$, the current path splits into two branches, one explored immediately and the other later, and when the current weight reaches $2$, the remaining bits are forced to be $0$ and the completed string is output immediately. The resulting listing is

\begin{center}
0000, 1000, 1100, 0100, 0110, 1010, 0010,
0011, 0101, 1001, 0001,
\end{center}

which is exactly the BRGC of length $4$ with every string of weight larger than $2$ removed. Consecutive strings differ in at most two bits. For example, the string $0011$ is output as soon as its last two bits are fixed, since their weight already equals $2$; the two remaining bits are set to $0$ without further branching.

\begin{figure}[h]
\centering
\includegraphics[width=\columnwidth]{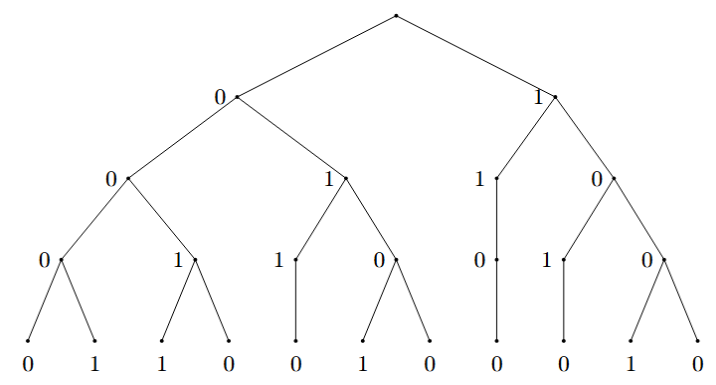} 
\caption{Recursive generation tree of the pruned BRGC for $n=4$ and $k=2$.}
\label{fig:brgc-tree}
\end{figure}

\medskip

For $0 \leq k \leq n-1$, let $\mathcal{S}_{n-1,k} = \{ S \subseteq \{1, 2, \dots, n-1\} \mid |S| \leq k \}$ and $\mathcal{B}_{n-1,k} = \{ b \in \{0,1\}^{n-1} \mid \mathrm{wt}(b) \leq k \}$, where $\mathrm{wt}(b)$ is the weight of $b$. It is straightforward to see that there exists a bijection between the two collections.

\smallskip
By Theorem~2 of \cite{flipswap}, the pruned BRGC of $\mathcal{B}_{n-1,k}$ is a $2$-Gray code. Under the natural bijection between $\mathcal{B}_{n-1,k}$ and $\mathcal{S}_{n-1,k}$, this induces the following Gray code on $\mathcal{S}_{n-1,k}$.

\begin{lemma}
\label{lem:gray}
There exists a Gray code for $\mathcal{S}_{n-1,k}$, in which any two consecutive subsets differ by adding, removing, or replacing a single element (i.e., simultaneously removing one element and adding a different one).
\end{lemma}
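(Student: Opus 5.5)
We transport the pruned BRGC listing of $\mathcal{B}_{n-1,k}$ to $\mathcal{S}_{n-1,k}$ through the natural bijection $b=b_1b_2\cdots b_{n-1}\mapsto S(b)=\{i : b_i=1\}$. Because this map is a bijection, the image of the listing contains every subset in $\mathcal{S}_{n-1,k}$ exactly once. By Lemma~\ref{lem:brgc-rec}, \textsc{PruneBRGC} produces precisely the strings of weight at most $k$ in BRGC order. By Theorem~2 of \cite{flipswap}, consecutive strings differ in at most two bits. What remains is to turn ``at most two bits'' into ``add, remove, or replace a single element''.

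Let $b,b'$ be consecutive strings in the listing. They are distinct, so they differ in either one or two positions.

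\begin{itemize}
\item \emph{One position $i$.} Then $S(b')=S(b)\cup\{i\}$ or $S(b')=S(b)\setminus\{i\}$. This is a single addition or removal.
\item \emph{Two positions $i\neq j$, with opposite changes.} Say $b_i=1,b'_i=0$ and $b_j=0,b'_j=1$. Then $S(b')=(S(b)\setminus\{i\})\cup\{j\}$, which is a replacement.
\item \emph{Two positions, with the same change.} This case must be excluded, and I expect it to be the only real obstacle. If both bits flip from $0$ to $1$ (or both from $1$ to $0$), the symmetric difference of the subsets has size two but is not a replacement.
\end{itemize}

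To exclude the third case, I would use the fact that the flip-swap Gray code of \cite{flipswap} uses only two operations: a single-bit flip, or a swap of a $1$ with a $0$ (a transposition of two bit values). A swap always preserves weight, so two simultaneous changes must be opposite. If Theorem~2 is read only as ``at most two bits'', I would argue directly from the structure of the BRGC instead.

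In the unpruned BRGC, consecutive strings differ in exactly one bit. When strings of weight $>k$ are pruned, a gap arises only where the list passes through a maximal run of removed strings. Every string in such a run has weight exactly $k+1$: the removed strings around a run are reached by single flips from strings of weight at most $k$, and the BRGC structure (for example, the local recursion above) prevents the walk from going deeper before returning. So the surviving neighbours on either side of the run are obtained by single flips from weight-$(k+1)$ strings. Hence they both have weight $k$, and the net change between them preserves weight.

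It would then remain to check that the net change between these neighbours is exactly one swap. The cleanest route is to cite Theorem~2 of \cite{flipswap} in its flip/swap form, and I would rely on that.

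Finally, cyclicity is not required by the statement, so nothing more is needed. If desired, the first string $0^{n-1}$ and the last string $0^{n-2}1$ differ by one flip, which corresponds to adding or removing the element $n-1$.
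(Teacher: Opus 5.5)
Your route is the paper's: transport the pruned BRGC through the bijection, use Theorem~2 of \cite{flipswap} for ``at most two bits,'' then split into cases. You are right that ``at most two bits'' alone does not rule out two flips in the same direction; the paper's proof lists the replacement case without addressing this. However, neither of your two ways of excluding that case works as written.

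(i) ``Flip-swap'' names the closure operations that \emph{define} the language class (flip the leftmost $1$; swap it with the bit to its right). It does not describe the moves between consecutive strings of the Gray code, and the theorem as used in the paper gives only ``at most two bits.'' No weight-preserving ``flip/swap form'' can hold for flip-swap languages in general. Binary necklaces (lexicographically least rotations) are closed under both operations, yet in the BRGC order used here the necklace following $0000$ is $0011$.

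(ii) Your direct claim that every string in a maximal removed run has weight exactly $k+1$ is false, and so is ``the walk cannot go deeper.'' In the paper's own listing of $\mathcal{B}_{4,2}$, the strings removed between $0011$ and $0101$ are $1011,1111,0111$, and $1111$ has weight $k+2=4$. You then defer the remaining step (``exactly one swap'') to the citation in (i), so the crux is left unproved.

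The fix is short and needs only ``at most two bits.'' Only the ends of a removed run matter. The first removed string is a single flip away from the survivor before it. That survivor has weight at most $k$, while the removed string has weight at least $k+1$, so the survivor has weight exactly $k$. Symmetrically, the survivor after the run also has weight exactly $k$.

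Hence two consecutive strings of the pruned listing fall into one of two cases:
\begin{itemize}
\item They are adjacent in the full BRGC, so they differ by one flip (an addition or removal).
\item They are distinct strings of the same weight $k$ differing in at most two positions. This forces exactly one $1\to 0$ and one $0\to 1$ change, i.e.\ a replacement.
\end{itemize}
With this step in place your proof is complete, and it also fills the same hole in the paper's argument.
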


\begin{proof}
By Theorem~2 of \cite{flipswap}, the pruned BRGC of $\mathcal{B}_{n-1,k}$ is a 2-Gray code in which consecutive strings differ in at most two bits. Under the natural bijection between $\mathcal{B}_{n-1,k}$ and $\mathcal{S}_{n-1,k}$, consecutive strings differ by

1. adding a single element (a $0 \to 1$ bit flip), or

2. removing a single element (a $1 \to 0$ bit flip), or

3. replacing one element with another (simultaneously flipping a $1$ to $0$ and a $0$ to $1$).

Therefore, the pruned BRGC induces a Gray code on $\mathcal{S}_{n-1,k}$ in which consecutive subsets differ by adding, removing, or replacing a single element.
\end{proof}

Building upon Lemma~\ref{lem:gray}, we obtain the following result.

\begin{lemma}
\label{thm:graph-sequence}
Fix a well-ordered $k$-degenerate graph $H$ on $\{1,\dots,n-1\}$.
Adding vertex $n$ to $H$ with neighbor set $S \subseteq V(H)$, $|S|\le k$,
where $S$ is taken in the Gray code order of $\mathcal{S}_{n-1,k}$,
produces a pivot Gray code on $\{1,\dots,n\}$ in which consecutive graphs
differ by adding, removing, or pivoting a single edge incident to vertex $n$.
\end{lemma}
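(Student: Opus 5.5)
The plan is to reduce the statement to Lemma~\ref{lem:gray} via the map $\Phi_H:\mathcal{S}_{n-1,k}\to\{\text{graphs on }\{1,\dots,n\}\}$ given by $\Phi_H(S)=H+n+\{\{s,n\}: s\in S\}$. I will check three things in order: that every image is a well-ordered $k$-degenerate graph, that $\Phi_H$ is injective, and that consecutive images differ by one admissible operation.

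\emph{Well-definedness and injectivity.} For $i\le n-1$, the neighbors of $i$ among $1,\dots,i-1$ in $\Phi_H(S)$ are exactly its neighbors in $H$, because every new edge has endpoint $n>i$. So each such $i$ still has at most $k$ earlier neighbors. Vertex $n$ has earlier-neighbor set exactly $S$, and $|S|\le k$. Hence $\Phi_H(S)$ is well-ordered $k$-degenerate. Moreover, $S$ can be read off as $N(n)$ in $\Phi_H(S)$, so $\Phi_H$ is injective. The listing is therefore a list of distinct graphs, one for each $S\in\mathcal{S}_{n-1,k}$, all extending $H$.

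\emph{Closeness.} Let $S,S'$ be consecutive in the Gray code of Lemma~\ref{lem:gray}. The edge sets of $\Phi_H(S)$ and $\Phi_H(S')$ agree on $E(H)$, and their symmetric difference is $\{\{s,n\}: s\in S\triangle S'\}$. I split into the three cases of Lemma~\ref{lem:gray}.
\begin{itemize}
\item If $S'=S\cup\{x\}$, the edge $\{x,n\}$ is added.
\item If $S'=S\setminus\{x\}$, the edge $\{x,n\}$ is removed.
\item If $S'=(S\setminus\{x\})\cup\{y\}$ with $x\ne y$, the edge $\{x,n\}$ is replaced by $\{y,n\}$. These two edges share the endpoint $n$, so this is a pivot about vertex $n$.
\end{itemize}
In all three cases the changed edges are incident to $n$, as claimed.

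The main point needing care is that the replacement case is genuinely a pivot, meaning the removed and added edges share a vertex. This is immediate here because every edge whose membership varies is incident to $n$. Everything else is routine bookkeeping; I do not expect any real obstacle. If cyclicity were also wanted at this stage, I would additionally check that the first and last subsets of the pruned BRGC, namely $\emptyset$ and $\{n-1\}$ from $0^{n-1}$ and $0^{n-2}1$, differ by one element. That is not part of the statement as posed.
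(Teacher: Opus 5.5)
Your proof is correct and takes essentially the same route as the paper: the paper gives this lemma no separate proof, presenting it as an immediate consequence of Lemma~\ref{lem:gray} (add, remove, or replace one element of $S$ becomes add, remove, or pivot one edge at $n$), and it checks that each extension is well-ordered $k$-degenerate inside the proof of Theorem~\ref{th1}. Your injectivity check and your observation that both exchanged edges share the endpoint $n$ make explicit what the paper leaves implicit.
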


Based on the above observations, Algorithm~\textsc{GenWOD} recursively generates a cyclic pivot Gray code for well-ordered $k$-degenerate graphs, and it proceeds as follows. For the base case $n=1$, it outputs the trivial graph and returns. For $n \ge 2$, it first recursively generates a cyclic pivot Gray code for well-ordered $k$-degenerate graphs on $n-1$ vertices; each of these graphs is referred to as a \emph{base graph}. For each base graph $G_{n-1}$, the algorithm enumerates all subsets $S \subseteq \{1,\dots,n-1\}$ with $|S| \le k$ in Gray code order, listed in forward order for the first base graph, in reverse order for the second, and alternating thereafter. For each subset $S$, a graph on $n$ vertices is obtained by adding vertex $n$ to $G_{n-1}$ and connecting it to the vertices in $S$. Each graph is output as soon as it is produced, and no list of graphs is ever stored. By Theorem~\ref{th1}, this construction yields a cyclic pivot Gray code for graphs on $n$ vertices. A complete Python implementation is provided in Appendix~A.

\begin{algorithm}[h]
\caption{Cyclic pivot Gray code for well-ordered $k$-degenerate graphs on $n$ vertices}
\label{o1}
\begin{algorithmic}[1]
\Function{GenWOD}{$n,k$} 
    \If{$n=1$}
        \State \textbf{output} $(\{1\},\varnothing)$
        \State \Return
    \EndIf
    \State $d \gets 0$
    \For{$G \in \Call{GenWOD}{n-1,k}$} 
        \For{$S \in$ Gray code listing of $\mathcal{S}_{n-1,k}$ with direction $(-1)^d$}
            \State \textbf{output} $(\{1,\dots,n\},\, E(G) \cup \{(n,v): v \in S\})$
        \EndFor
        \State $d \gets 1-d$
    \EndFor
\EndFunction
\end{algorithmic}
\end{algorithm}

\begin{lemma}\label{od}
Let $n \geq 1$ and $0 \leq k \leq n-1$, and let $\lambda(n,k)$ denote the number of well-ordered $k$-degenerate graphs on $n$ labeled vertices. Then:

\begin{enumerate}
\item If $n = 1$, or $n \geq 2$ and $k = 0$, then $\lambda(n,k) = 1$.
\item If $n \geq 2$ and $k \geq 1$, then $\lambda(n,k)$ is even.
\end{enumerate}
\end{lemma}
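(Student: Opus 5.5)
We prove both parts from the recursive structure behind Algorithm~\textsc{GenWOD}: every well-ordered $k$-degenerate graph on $\{1,\dots,n\}$ arises uniquely from a well-ordered $k$-degenerate graph $H$ on $\{1,\dots,n-1\}$ together with a neighbor set $S\subseteq\{1,\dots,n-1\}$ with $|S|\le k$. Indeed, deleting vertex $n$ preserves the condition on vertices $1,\dots,n-1$, since their lower neighborhoods are unchanged. Conversely, adding $n$ with such an $S$ yields a valid graph, because only vertex $n$ gains a lower neighbor set, and that set is $S$. This gives
\[
\lambda(n,k)=\lambda(n-1,k)\cdot|\mathcal{S}_{n-1,k}|=\lambda(n-1,k)\sum_{i=0}^{\min\{n-1,k\}}\binom{n-1}{i},
\]
which matches the recurrence of Bauer et al. Here we read $\lambda(n-1,k)$ for any $k\ge 0$. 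The hypothesis $k\le n-1$ only matters through the $\min$, and $\lambda(m,k)$ for $k>m-1$ is just the count of all labeled graphs on $m$ vertices.

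For part~1, when $n=1$ the only graph is the single vertex with no edges, so $\lambda(1,k)=1$. When $k=0$, no vertex may have a lower neighbor. Every edge $\{i,j\}$ with $i<j$ would give $j$ a lower neighbor, so the graph is edgeless and $\lambda(n,0)=1$. Alternatively, the recurrence gives factor $\binom{n-1}{0}=1$ at each step.

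For part~2, take $n\ge2$ and $k\ge1$. It suffices to show the last factor $|\mathcal{S}_{n-1,k}|=\sum_{i=0}^{\min\{n-1,k\}}\binom{n-1}{i}$ is even, since $\lambda(n-1,k)$ is a positive integer. Set $m=n-1\ge1$.
\begin{itemize}
\item If $k\ge m$, the sum is $2^m$, which is even.
\item If $1\le k<m$, parity is not automatic; for example, $m=3,k=1$ gives $1+3=4$, but $m=4,k=1$ gives $1+4=5$.
\end{itemize}
So a direct parity argument on the last factor fails. Instead we should pair graphs with a fixed-point-free involution: toggle the edge $\{1,2\}$. Vertex $2$ has lower-neighbor set either $\emptyset$ or $\{1\}$, both of size at most $1\le k$. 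No other vertex's lower neighborhood changes. Hence toggling $\{1,2\}$ maps well-ordered $k$-degenerate graphs to well-ordered $k$-degenerate graphs and has no fixed points, so $\lambda(n,k)$ is even.

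The main subtlety is recognizing that the recurrence's last factor need not be even. The edge-toggle involution is the clean route; alternatively, unwinding the recurrence shows the factor for step $n=2$ is $\binom{1}{0}+\binom{1}{1}=2$, which already makes the product even.
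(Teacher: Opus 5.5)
Your proof is correct and follows the paper's: part 1 is handled the same way. Your fixed-point-free toggle of the edge $\{1,2\}$ is exactly the paper's bijection between graphs with and without $(1,2)$, and your closing remark that the factor at step $n=2$ is $\binom{1}{0}+\binom{1}{1}=2$ is the paper's alternative proof via the product formula of Theorem~\ref{enu}. Your observation that the last factor $|\mathcal{S}_{n-1,k}|$ need not be even is correct, and it explains why the argument has to look at vertex $2$ rather than vertex $n$.
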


\begin{proof}
If $n=1$, there is exactly one graph, namely the trivial graph consisting of a single isolated vertex, so $\lambda(1,k)=1$.

Now suppose $n\ge2$. If $k=0$, then by the definition of a well-ordered $0$-degenerate graph, every vertex $i$ has no neighbors among $1, 2, \dots, i-1$; hence no edge can occur, and the unique graph is the empty graph. Therefore $\lambda(n,0)=1$.

If $k\ge1$, consider the edge $(1,2)$. Since vertex $2$ may have at most $k$ neighbors among earlier vertices, there are exactly two possibilities:

\begin{enumerate}
\item the edge $(1,2)$ is present;
\item the edge $(1,2)$ is absent.
\end{enumerate}

For every graph in the first class, deleting the edge $(1,2)$ produces a unique graph in the second class. Conversely, for every graph in the second class, adding the edge $(1,2)$ produces a unique graph in the first class. Thus the two classes contain the same number of graphs, say $m$, so $\lambda(n,k)=2m$ is even.
\end{proof}

\begin{theorem}\label{th1}
The algorithm \textsc{GenWOD} generates a list of all well-ordered $k$-degenerate graphs on $n$ vertices in cyclic pivot Gray code order, where consecutive graphs differ by the addition, removal, or pivot of a single edge.
\end{theorem}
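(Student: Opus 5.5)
We argue by induction on $n$ and prove three things together: the list is complete, it has no repetitions, and it is a cyclic pivot Gray code. For completeness, every well-ordered $k$-degenerate graph $G$ on $\{1,\dots,n\}$ decomposes uniquely as $G-n$ together with a neighbor set $S=N(n)$, where $|S|\le k$. The graph $G-n$ is itself a well-ordered $k$-degenerate graph on $\{1,\dots,n-1\}$, since the condition on each vertex $i<n$ refers only to vertices smaller than $i$. Conversely, any base graph combined with any $S\in\mathcal{S}_{n-1,k}$ is valid. By induction, \textsc{GenWOD}$(n-1,k)$ lists each base graph exactly once. For each base graph the inner loop lists every element of $\mathcal{S}_{n-1,k}$ exactly once, by Lemma~\ref{lem:gray}. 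Hence \textsc{GenWOD}$(n,k)$ lists each graph exactly once, which also recovers $\lambda(n,k)=\lambda(n-1,k)\sum_{i\le k}\binom{n-1}{i}$.

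For the Gray code property there are two kinds of consecutive pairs.

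\emph{Within a block.} Two consecutive graphs produced from the same base graph share $G_{n-1}$ and differ only in the neighbor set of vertex $n$. By Lemma~\ref{thm:graph-sequence}, they differ by adding, removing, or pivoting a single edge incident to $n$.

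\emph{Between blocks.} Let $G$ and $G'$ be consecutive base graphs. Because the direction alternates, the last subset used for $G$ equals the first subset used for $G'$: it is the last element of the Gray code listing if the block for $G$ ran forward, and the first element if it ran backward. So the two output graphs have the same neighbor set for $n$ and differ exactly as $G$ and $G'$ do. By the induction hypothesis this is one edge addition, removal, or pivot.

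The base cases are immediate. For $n=1$ the list has a single graph. For $k=0$ the list is the single empty graph, which is trivially cyclic; Lemma~\ref{od} separates out exactly these cases.

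The main obstacle is cyclicity: the first and last graphs must also differ by one operation. Take $n\ge2$ and $k\ge1$. The first output is the first base graph $G_1$ with the first subset $S_{\mathrm{first}}$. The last output is the last base graph $G_m$, where $m=\lambda(n-1,k)$, with some subset. The direction of the final block depends on the parity of $m$, and we split into cases.

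\emph{Case $n-1\ge2$.} Lemma~\ref{od} gives $m$ even, so the last block runs backward and ends at $S_{\mathrm{first}}$. The first and last graphs then share the neighbor set of $n$ and differ as $G_1$ and $G_m$ do. By the cyclic induction hypothesis, that is one operation.

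\emph{Case $n-1=1$.} Here $m=1$ and there is a single block, so we must compare the first and last subsets of the pruned BRGC on $\mathcal{B}_{1,k}$, namely $0$ and $1$. These differ by one edge, so the list is cyclic.

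In general we should also record that the pruned BRGC is itself cyclic. Its first string is $0^{n-1}$ and its last is $0^{n-2}1$, which differ by one flip. This is not needed when $m$ is even, but it covers the single-block case and any case where $m=1$.

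Finally, we must check that the first and last graphs are distinct, so that the cyclic condition is nontrivial and no graph is counted twice. This follows from the no-repetition argument above whenever the total count exceeds one.
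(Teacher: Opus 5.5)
Your proposal is correct and follows essentially the same argument as the paper: induction on $n$ via the unique decomposition of a graph into its base graph and the neighbor set of vertex $n$, within-block steps from Lemma~\ref{thm:graph-sequence}, between-block steps from the alternating directions plus the induction hypothesis, and cyclicity from the evenness of $m$ (Lemma~\ref{od}) together with the inductive cyclicity of $G_m \to G_1$. Your single-block case $n-1=1$ is exactly the paper's explicit base case $n=2$, $k=1$.
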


\begin{proof}
When $n=1$ or $n\ge 2$ with $k=0$, there is only one well-ordered $k$-degenerate graph, i.e., the $n$-vertex empty graph, which trivially forms a cyclic pivot Gray code.

Below we discuss the case $n\ge 2$ and $k\ge 1$, where, by Lemma~\ref{od}, the number of well-ordered $k$-degenerate graphs is even.
The proof is by induction on $n$.

For the induction base $n = 2$, recall that we are in the case $k \ge 1$.
Since $0 \le k \le n-1 = 1$, this forces $k = 1$.
A well-ordered $1$-degenerate graph on $\{1,2\}$ allows vertex $2$ to be adjacent
to a subset $S \subseteq \{1\}$ with $|S| \le 1$, yielding exactly two graphs:
$G_1 = (\{1,2\}, \varnothing)$  $(S = \varnothing)$ and
$G_2 = (\{1,2\}, \{(1,2)\})$  $(S = \{1\})$.
The transition from $G_1$ to $G_2$ adds the edge $(1,2)$, and from $G_2$ back to $G_1$
removes it, forming a cyclic pivot Gray code.
This establishes the base case.

For $n \geq 3$ and $k \ge 1$, assume that for $n-1$ there exists a cyclic Gray code
$L = [G_1, G_2, \ldots, G_m]$
for the set of well-ordered $k$-degenerate graphs on $n-1$ vertices, where consecutive graphs differ by the addition, removal, or pivot of exactly one edge. By Lemma~\ref{od}, $m$ is even. Each graph $G_i = (\{1,2,\ldots,n-1\}, E_i)$ satisfies the well-ordered $k$-degenerate condition: for each vertex $j=2,\ldots,n-1$, it has at most $k$ neighbors in $\{1,\ldots,j-1\}$.

We construct a cyclic pivot Gray code for well-ordered $k$-degenerate graphs on $n$ vertices as follows.

For each $G_i$ in the inductive cyclic Gray code $L$, define the family $\mathbf{F}_i = \{ G_i^{(S)} \mid S \in \mathcal{S}_{n-1,k} \}$, where $G_i^{(S)} = (\{1,2,\ldots,n\}, E_i \cup \{ (n, v) \mid v \in S \})$ and $\mathcal{S}_{n-1,k} = \{ S \subseteq \{1,2,\ldots,n-1\} \mid |S| \leq k \}$. We establish that $G_i^{(S)}$ is a well-ordered $k$-degenerate graph on $n$ vertices. Since $G_i$ is well-ordered $k$-degenerate on $n-1$ vertices, and vertex $n$ connects to at most $k$ preceding vertices (by $|S| \leq k$), all conditions are satisfied.

Within $\mathbf{F}_i$, by Lemma~\ref{thm:graph-sequence}, for consecutive subsets $S_j, S_{j+1} \in \mathcal{S}_{n-1,k}$, the graphs $G_i^{(S_j)}$ and $G_i^{(S_{j+1})}$ differ by adding, removing, or pivoting a single edge incident to $n$, thereby satisfying the Gray code condition.

To connect $\mathbf{F}_i$ and $\mathbf{F}_{i+1}$ for all $i=1,2,\ldots,m-1$, by Lemma~\ref{lem:gray}, $\mathcal{S}_{n-1,k}$ has a Gray code sequence $S_1, S_2, \ldots, S_p$ (where $p = \sum_{i=0}^{k} \binom{n-1}{i}$), and the reverse sequence $S_p, S_{p-1}, \ldots, S_1$ is also a Gray code. For $\mathbf{F}_i$ with odd $i$, use the forward sequence, so $\mathbf{F}_i = (G_i^{(S_1)}, G_i^{(S_2)}, \ldots, G_i^{(S_p)})$, ending with $S_p$. For $\mathbf{F}_{i+1}$ with even $i+1$, use the reverse sequence, so $\mathbf{F}_{i+1} = (G_{i+1}^{(S_p)},{}$
$G_{i+1}^{(S_{p-1})},{}$ $\ldots,{}$ $G_{i+1}^{(S_1)})$, starting with $S_p$. The transition from $G_i^{(S_p)}$ to $G_{i+1}^{(S_p)}$ differs by the change in edge between $G_i$ and $G_{i+1}$ (addition, removal, or pivot of one edge), while the edges incident to $n$ remain the same. The same reasoning applies when $i$ is even and $i+1$ is odd, with the roles of forward and reverse interchanged. Alternating the enumeration directions of the subsets $S$ across successive families $\mathbf{F}_i$ (e.g., forward for $\mathbf{F}_1$, reverse for $\mathbf{F}_2$, forward for $\mathbf{F}_3$, \ldots) ensures that each $\mathbf{F}_i$ is internally a Gray code and transitions to $\mathbf{F}_{i+1}$ maintain the Gray code property.

Finally, note that $m$ is even, so $\mathbf{F}_m = (G_{m}^{(S_p)}, G_{m}^{(S_{p-1})}, \ldots, G_{m}^{(S_1)})$, ending with $S_1$, while $\mathbf{F}_1 = (G_1^{(S_1)}, G_1^{(S_2)}, \ldots, G_1^{(S_p)})$, starting with $S_1$. The transition from $G_m^{(S_1)}$ to $G_1^{(S_1)}$ differs by the edge change between $G_m$ and $G_1$ in the underlying $(n-1)$-vertex graphs, which, by the inductive hypothesis, is the addition, removal, or pivot of a single edge, while the edges incident to $n$ remain the same.
Therefore, the entire sequence forms a cyclic pivot Gray code.

To ensure that all well-ordered $k$-degenerate graphs on $n$ vertices are covered, consider any such graph $G = (\{1,2,\ldots,n\}, E)$. The induced subgraph on $\{1,2,\ldots,n-1\}$ is a well-ordered $k$-degenerate graph $G_i$ on $n-1$ vertices, and the neighbors of vertex $n$ form $S = \{ v \in \{1,2,\ldots,n-1\} \mid (n,v) \in E \}$ with $|S| \leq k$, thus $G = G_i^{(S)}$. Since the construction of the Gray code includes all $G_i$ and all $S \subseteq \{1, 2, \ldots, n-1\}$, every $G$ appears in the sequence, ensuring that all well-ordered $k$-degenerate graphs on $n$ vertices are fully covered. Since each $G_i^{(S)}$ is uniquely determined by $G_i$ and $S$, there are no duplicates.
\end{proof}

As an example, Figure~\ref{fig2} illustrates the cyclic pivot Gray code for $n=3$, $k=2$ generated by Algorithm~\textsc{GenWOD}. For $n=1$, the algorithm outputs the trivial graph $(\{1\},\varnothing)$. For $n=2$, the pruned BRGC of length $1$, namely $0,1$, corresponds to the neighbor sets $\varnothing$ and $\{1\}$ for vertex $2$, which yields the two base graphs (empty and with edge $(1,2)$). For $n=3$, Algorithm~\textsc{PruneBRGC} generates the pruned BRGC of length $2$, namely $00,10,11,01$, corresponding to subsets $\varnothing,\{1\},\{1,2\},\{2\}$ of $\{1,2\}$. For the first base graph (no edge), these subsets are enumerated in forward order, producing graphs $1$ through $4$; for the second base graph (edge $(1,2)$), they are enumerated in reverse order, producing graphs $5$ through $8$. Each consecutive pair differs by adding, removing, or pivoting a single edge, as indicated by the Gray code property; moreover, graph $1$ is obtained from graph $8$ by removing the edge $(1,2)$, so the listing is cyclic.

\begin{figure}[t]
    \centering
    \begin{tikzpicture}[
        scale=0.95,
        v/.style={circle, fill=black, inner sep=2pt},
        edge/.style={},
        lab/.style={font=\scriptsize, inner sep=3pt},
        layerlabel/.style={font=\small, anchor=east, align=right}
    ]

    \node[layerlabel] at (-6.0, 5.0) {Layer $n=1$};
    \node[layerlabel] at (-6.0, 3.0) {Layer $n=2$};
    \node[layerlabel] at (-6.0, -0.5) {Layer $n=3$};

    \draw[dashed, black] (-8.0, 4.3) -- (6.0, 4.3);
    \draw[dashed, black] (-8.0, 2.0) -- (6.0, 2.0);

    \begin{scope}[yshift=5cm]
        \node[v] at (0,0) {};
        \node[lab, above=4pt] at (0,0) {1};
    \end{scope}

    \begin{scope}[yshift=3cm, xshift=-2cm]
        \node[v] at (0,0) {};
        \node[v] at (1.2,0) {};
        \node[lab, above=4pt] at (0,0) {1};
        \node[lab, above=4pt] at (1.2,0) {2};
    \end{scope}
    \begin{scope}[yshift=3cm, xshift=2cm]
        \node[v] (a) at (0,0) {};
        \node[v] (b) at (1.2,0) {};
        \node[lab, above=4pt] at (0,0) {1};
        \node[lab, above=4pt] at (1.2,0) {2};
        \draw[edge] (a) -- (b);
    \end{scope}

    \begin{scope}[yshift=0.5cm, xshift=-4.5cm]
        \node[v] at (0,0.8) {};
        \node[v] at (-0.8,0) {};
        \node[v] at (0.8,0) {};
        \node[lab, above=4pt] at (0,0.8) {1};
        \node[lab, left=4pt] at (-0.8,0) {2};
        \node[lab, right=4pt] at (0.8,0) {3};
    \end{scope}

    \begin{scope}[yshift=0.5cm, xshift=-1.5cm]
        \node[v] (a1) at (0,0.8) {};
        \node[v] at (-0.8,0) {};
        \node[v] (a3) at (0.8,0) {};
        \node[lab, above=4pt] at (0,0.8) {1};
        \node[lab, left=4pt] at (-0.8,0) {2};
        \node[lab, right=4pt] at (0.8,0) {3};
        \draw[edge] (a1) -- (a3);
    \end{scope}

    \begin{scope}[yshift=0.5cm, xshift=1.5cm]
        \node[v] (b1) at (0,0.8) {};
        \node[v] (b2) at (-0.8,0) {};
        \node[v] (b3) at (0.8,0) {};
        \node[lab, above=4pt] at (0,0.8) {1};
        \node[lab, left=4pt] at (-0.8,0) {2};
        \node[lab, right=4pt] at (0.8,0) {3};
        \draw[edge] (b1) -- (b3);
        \draw[edge] (b2) -- (b3);
    \end{scope}

    \begin{scope}[yshift=0.5cm, xshift=4.5cm]
        \node[v] at (0,0.8) {};
        \node[v] (c2) at (-0.8,0) {};
        \node[v] (c3) at (0.8,0) {};
        \node[lab, above=4pt] at (0,0.8) {1};
        \node[lab, left=4pt] at (-0.8,0) {2};
        \node[lab, right=4pt] at (0.8,0) {3};
        \draw[edge] (c2) -- (c3);
    \end{scope}

    \begin{scope}[yshift=-1.5cm, xshift=-4.5cm]
        \node[v] (d1) at (0,0.8) {};
        \node[v] (d2) at (-0.8,0) {};
        \node[v] (d3) at (0.8,0) {};
        \node[lab, above=4pt] at (0,0.8) {1};
        \node[lab, left=4pt] at (-0.8,0) {2};
        \node[lab, right=4pt] at (0.8,0) {3};
        \draw[edge] (d1) -- (d2);
        \draw[edge] (d2) -- (d3);
    \end{scope}

    \begin{scope}[yshift=-1.5cm, xshift=-1.5cm]
        \node[v] (e1) at (0,0.8) {};
        \node[v] (e2) at (-0.8,0) {};
        \node[v] (e3) at (0.8,0) {};
        \node[lab, above=4pt] at (0,0.8) {1};
        \node[lab, left=4pt] at (-0.8,0) {2};
        \node[lab, right=4pt] at (0.8,0) {3};
        \draw[edge] (e1) -- (e2);
        \draw[edge] (e2) -- (e3);
        \draw[edge] (e1) -- (e3);
    \end{scope}

    \begin{scope}[yshift=-1.5cm, xshift=1.5cm]
        \node[v] (f1) at (0,0.8) {};
        \node[v] (f2) at (-0.8,0) {};
        \node[v] (f3) at (0.8,0) {};
        \node[lab, above=4pt] at (0,0.8) {1};
        \node[lab, left=4pt] at (-0.8,0) {2};
        \node[lab, right=4pt] at (0.8,0) {3};
        \draw[edge] (f1) -- (f2);
        \draw[edge] (f1) -- (f3);
    \end{scope}

    \begin{scope}[yshift=-1.5cm, xshift=4.5cm]
        \node[v] (g1) at (0,0.8) {};
        \node[v] (g2) at (-0.8,0) {};
        \node[v] at (0.8,0) {};
        \node[lab, above=4pt] at (0,0.8) {1};
        \node[lab, left=4pt] at (-0.8,0) {2};
        \node[lab, right=4pt] at (0.8,0) {3};
        \draw[edge] (g1) -- (g2);
    \end{scope}

    \end{tikzpicture}
    \caption{Well-ordered $2$-degenerate graphs on 3 vertices ($n=3$, $k=2$) in cyclic pivot Gray code order. The first row contains graphs 1--4 (left to right), and the second row contains graphs 5--8 (left to right).}
    \label{fig2}
\end{figure}

Applying Algorithm~\textsc{GenWOD} recursively to the 8 graphs on 3 vertices in Figure~\ref{fig2} yields the cyclic pivot Gray code for $n=4$ and $k=2$, shown in Figure~\ref{n=4,k=2}. For $n=4$, Algorithm~\textsc{PruneBRGC} generates the pruned BRGC of length $3$, namely $000, 100, 110, 010, 011, 101, 001$, which corresponds to the 7 subsets $\varnothing$, $\{1\}$, $\{1,2\}$, $\{2\}$, $\{2,3\}$, $\{1,3\}$, $\{3\}$ of $\{1,2,3\}$. Each of the 8 base graphs is combined with these 7 subsets in alternating forward/reverse order, producing $8 \times 7 = 56$ graphs.

\begin{figure*}[p]
\centering
\footnotesize
\begin{tikzpicture}[
    scale=0.65,
    v/.style={circle, fill=black, inner sep=2pt},
    edge/.style={},
    lab/.style={font=\footnotesize, inner sep=3pt}
]

\newcommand{\g}[3]{
  \begin{scope}[xshift=#1cm, yshift=#2cm]
    \node[v] (1) at (0,1.15) {};
    \node[v] (2) at (1.25,1.15) {};
    \node[v] (3) at (1.25,0) {};
    \node[v] (4) at (0,0) {};
    \node[lab, above left=0.05pt of 1] {1};
    \node[lab, above right=0.05pt of 2] {2};
    \node[lab, below right=0.05pt of 3] {3};
    \node[lab, below left=0.05pt of 4] {4};
    #3
  \end{scope}
}

\g{0}{24}{}                                      
\g{2.9}{24}{\draw[edge] (1)--(4);}               
\g{5.8}{24}{\draw[edge] (1)--(4) (2)--(4);}      
\g{8.7}{24}{\draw[edge] (2)--(4);}               
\g{11.6}{24}{\draw[edge] (2)--(4) (3)--(4);}     
\g{14.5}{24}{\draw[edge] (1)--(4) (3)--(4);}     
\g{17.4}{24}{\draw[edge] (3)--(4);}              

\g{0}{20.8}{\draw[edge] (1)--(3) (3)--(4);}      
\g{2.9}{20.8}{\draw[edge] (1)--(3) (1)--(4) (3)--(4);}  
\g{5.8}{20.8}{\draw[edge] (1)--(3) (2)--(4) (3)--(4);}  
\g{8.7}{20.8}{\draw[edge] (1)--(3) (2)--(4);}    
\g{11.6}{20.8}{\draw[edge] (1)--(3) (1)--(4) (2)--(4);}  
\g{14.5}{20.8}{\draw[edge] (1)--(3) (1)--(4);}   
\g{17.4}{20.8}{\draw[edge] (1)--(3);}            

\g{0}{17.6}{\draw[edge] (1)--(3) (2)--(3);}      
\g{2.9}{17.6}{\draw[edge] (1)--(3) (1)--(4) (2)--(3);}  
\g{5.8}{17.6}{\draw[edge] (1)--(3) (1)--(4) (2)--(3) (2)--(4);}  
\g{8.7}{17.6}{\draw[edge] (1)--(3) (2)--(3) (2)--(4);}  
\g{11.6}{17.6}{\draw[edge] (1)--(3) (2)--(3) (2)--(4) (3)--(4);}  
\g{14.5}{17.6}{\draw[edge] (1)--(3) (1)--(4) (2)--(3) (3)--(4);}  
\g{17.4}{17.6}{\draw[edge] (1)--(3) (2)--(3) (3)--(4);}  

\g{0}{14.4}{\draw[edge] (2)--(3) (3)--(4);}      
\g{2.9}{14.4}{\draw[edge] (1)--(4) (2)--(3) (3)--(4);}  
\g{5.8}{14.4}{\draw[edge] (2)--(3) (2)--(4) (3)--(4);}  
\g{8.7}{14.4}{\draw[edge] (2)--(3) (2)--(4);}    
\g{11.6}{14.4}{\draw[edge] (1)--(4) (2)--(3) (2)--(4);}  
\g{14.5}{14.4}{\draw[edge] (1)--(4) (2)--(3);}   
\g{17.4}{14.4}{\draw[edge] (2)--(3);}            

\g{0}{11.2}{\draw[edge] (1)--(2) (2)--(3);}      
\g{2.9}{11.2}{\draw[edge] (1)--(2) (1)--(4) (2)--(3);}  
\g{5.8}{11.2}{\draw[edge] (1)--(2) (1)--(4) (2)--(3) (2)--(4);}  
\g{8.7}{11.2}{\draw[edge] (1)--(2) (2)--(3) (2)--(4);}  
\g{11.6}{11.2}{\draw[edge] (1)--(2) (2)--(3) (2)--(4) (3)--(4);}  
\g{14.5}{11.2}{\draw[edge] (1)--(2) (1)--(4) (2)--(3) (3)--(4);}  
\g{17.4}{11.2}{\draw[edge] (1)--(2) (2)--(3) (3)--(4);}  

\g{0}{8}{\draw[edge] (1)--(2) (1)--(3) (2)--(3) (3)--(4);}  
\g{2.9}{8}{\draw[edge] (1)--(2) (1)--(3) (1)--(4) (2)--(3) (3)--(4);}  
\g{5.8}{8}{\draw[edge] (1)--(2) (1)--(3) (2)--(3) (2)--(4) (3)--(4);}  
\g{8.7}{8}{\draw[edge] (1)--(2) (1)--(3) (2)--(3) (2)--(4);}  
\g{11.6}{8}{\draw[edge] (1)--(2) (1)--(3) (1)--(4) (2)--(3) (2)--(4);}  
\g{14.5}{8}{\draw[edge] (1)--(2) (1)--(3) (1)--(4) (2)--(3);}  
\g{17.4}{8}{\draw[edge] (1)--(2) (1)--(3) (2)--(3);}  

\g{0}{4.8}{\draw[edge] (1)--(2) (1)--(3);}      
\g{2.9}{4.8}{\draw[edge] (1)--(2) (1)--(3) (1)--(4);}  
\g{5.8}{4.8}{\draw[edge] (1)--(2) (1)--(3) (1)--(4) (2)--(4);}  
\g{8.7}{4.8}{\draw[edge] (1)--(2) (1)--(3) (2)--(4);}  
\g{11.6}{4.8}{\draw[edge] (1)--(2) (1)--(3) (2)--(4) (3)--(4);}  
\g{14.5}{4.8}{\draw[edge] (1)--(2) (1)--(3) (1)--(4) (3)--(4);}  
\g{17.4}{4.8}{\draw[edge] (1)--(2) (1)--(3) (3)--(4);}  

\g{0}{1.6}{\draw[edge] (1)--(2) (3)--(4);}      
\g{2.9}{1.6}{\draw[edge] (1)--(2) (1)--(4) (3)--(4);}  
\g{5.8}{1.6}{\draw[edge] (1)--(2) (2)--(4) (3)--(4);}  
\g{8.7}{1.6}{\draw[edge] (1)--(2) (2)--(4);}    
\g{11.6}{1.6}{\draw[edge] (1)--(2) (1)--(4) (2)--(4);}  
\g{14.5}{1.6}{\draw[edge] (1)--(2) (1)--(4);}   
\g{17.4}{1.6}{\draw[edge] (1)--(2);}            

\end{tikzpicture}
\caption{Well-ordered $2$-degenerate graphs on 4 vertices ($n=4$, $k=2$) in cyclic pivot Gray code order generated by Algorithm~\textsc{GenWOD}. Rows correspond to the 8 well-ordered $2$-degenerate graphs on 3 vertices, and columns correspond to the 7 subsets of $\{1,2,3\}$ of size at most $2$, yielding $8 \times 7 = 56$ graphs in total.}
\label{n=4,k=2}
\end{figure*}

\begin{theorem}
The algorithm \textsc{GenWOD} generates a cyclic pivot Gray code for well-ordered $k$-degenerate graphs in constant amortized time per graph and uses $O(n^2)$ space.
\end{theorem}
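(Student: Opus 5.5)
The Gray code property is already given by Theorem~\ref{th1}, so only the running time and space bounds remain. I would first fix an implementation of \textsc{GenWOD} that never materializes the list of base graphs. It keeps a single adjacency matrix of size $n\times n$, which accounts for the $O(n^2)$ term, and runs the recursion level by level. At level $j$ (vertex $j$ being attached), one runs \textsc{PruneBRGC} on $j-1$ bits with weight bound $\min\{k,j-1\}$, in forward or reverse direction. Every leaf of that pruned-BRGC tree then triggers the level-$(j+1)$ computation. Each leaf differs from the previous one by at most two bit flips, which become at most two edge updates incident to $j$ in the matrix, each costing $O(1)$. The space is therefore $O(n^2)$ for the matrix plus $O(n)$ per level for the pruned-BRGC buffers and parameters, $O(n^2)$ in total.

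For the time bound, I would count the nodes of the whole recursion tree. By Lemma~\ref{lem:brgc-rec}, the cost of one run of \textsc{PruneBRGC} at level $j$ is $O(p_j)$, where $p_j=\sum_{i=0}^{\min\{k,j-1\}}\binom{j-1}{i}$ is the number of its outputs. Level $j$ is entered once for each graph on $j-1$ vertices, that is, $\lambda(j-1,k)$ times. Its total cost is thus $O(\lambda(j-1,k)\,p_j)=O(\lambda(j,k))$, using the recurrence $\lambda(j,k)=\lambda(j-1,k)p_j$ from the proof of Theorem~\ref{th1} (also the formula of Bauer et al.). Summing over levels gives total time $O\bigl(\sum_{j=1}^{n}\lambda(j,k)\bigr)$.

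The key step is to show that this sum is $O(\lambda(n,k))$. When $k\ge1$ and $j\ge2$ we have $p_j\ge 2$, since the subsets $\varnothing$ and $\{1\}$ are always available. Hence $\lambda(j-1,k)\le \lambda(j,k)/2$, and the sum is a geometric series bounded by $2\lambda(n,k)$. The case $k=0$ is separate and trivial: only the empty graph exists, and we output it once in $O(n)$ time. That cost cannot be amortized, so I would note it as a degenerate exception, or bound it by the output size.

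The main obstacle is that the Python-style description iterates ``for $G\in$ \textsc{GenWOD}$(n-1,k)$''. I must argue that this really is a generator-style recursion, where each base graph is produced in place and extended without being copied. Copying would cost $\Theta(n^2)$ per graph. I would also have to justify that the reverse-direction traversal of the pruned BRGC is just \textsc{Gen} with the initial direction flipped, so it has the same constant amortized cost. If that is unclear, I would instead record the list of at most two bit changes on the fly and undo them in reverse.
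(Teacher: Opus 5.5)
Your proposal is correct and follows essentially the same argument as the paper. Level $j$ costs $O(\lambda(j-1,k)\,p_j)=O(\lambda(j,k))$ by Lemma~\ref{lem:brgc-rec}, the counts at least double per level because $p_j\ge 2$ when $k\ge 1$ (so the total is $O(\lambda(n,k))$), and the $O(n^2)$ space comes from a single in-place adjacency matrix; your accounting for the per-level recursion stacks and your remark that the $k=0$ case incurs a non-amortizable $O(n)$ cost are sensible refinements of points the paper dismisses with ``the claim is immediate.''
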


\begin{proof}
As standard for generation algorithms, the time required to output a graph, that is, to emit its adjacency matrix, is not part of the analysis.

By Lemma~\ref{lem:brgc-rec}, enumerating the neighbor sets of vertex $m$ once takes $O(|\mathcal{S}_{m-1,k}|)$ time. Since level $m$ performs one such enumeration for each of the $\lambda(m-1,k)$ base graphs, its work is $O(|\mathcal{S}_{m-1,k}|\cdot\lambda(m-1,k))$; as $\lambda(m,k)=|\mathcal{S}_{m-1,k}|\cdot\lambda(m-1,k)$, level $m$ costs $O(\lambda(m,k))$. For $k\ge1$ and $m\ge2$ we have $|\mathcal{S}_{m-1,k}|\ge 2$ (it contains the empty set and a singleton), so $\lambda(m,k)\ge 2\,\lambda(m-1,k)$: the number of graphs at least doubles with each level. Hence the total time is dominated by the last level and is $O(\lambda(n,k))$, which amortizes over the $\lambda(n,k)$ generated graphs to constant time per graph. For $k=0$ the only graph is the empty graph, and the claim is immediate.

For the space, the algorithm maintains only the current graph, stored as a single $n\times n$ adjacency matrix, so the total space is $O(n^2)$.
\end{proof}

\section{Enumeration}

This section presents a closed-form enumeration formula for well-ordered $k$-degenerate graphs. In contrast to the recurrence formula used by Bauer et al.~\cite{bauer2010enumerating,krugcombinatorial}, our approach provides a more explicit combinatorial representation.

\begin{theorem}\label{enu}
Let $n \ge 1$ and $0 \le k \le n-1$.
The total number of well-ordered $k$-degenerate graphs on $n$ labeled vertices is
$\prod_{v=1}^{n} \sum_{i=0}^{k} \binom{v-1}{i}$.
\end{theorem}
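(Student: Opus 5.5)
The plan is to prove the product formula by induction on $n$, using the vertex-by-vertex decomposition already exploited in the proof of Theorem~\ref{th1}. The key observation is that a well-ordered $k$-degenerate graph on $\{1,\dots,n\}$ is determined by, and freely determined by, the sequence of ``back-neighbourhoods'' $N^-(v)=\{u<v : (u,v)\in E\}$ for $v=1,\dots,n$. The only constraint is $|N^-(v)|\le k$ for each $v$, and the constraints for different $v$ are independent.

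First I will set up the bijection. Map a graph $G$ to the tuple $(N^-(1),\dots,N^-(n))$, where $N^-(v)\in\mathcal{S}_{v-1,k}$ (with $\mathcal{S}_{0,k}=\{\varnothing\}$). This map is injective, since every edge $\{u,v\}$ with $u<v$ is recorded exactly once, namely in $N^-(v)$. It is surjective, since any tuple of subsets $S_v\subseteq\{1,\dots,v-1\}$ with $|S_v|\le k$ defines the edge set $\bigcup_v\{(u,v):u\in S_v\}$. That graph satisfies the well-ordered condition because its back-neighbourhoods are exactly the $S_v$. Equivalently, in inductive form, this is the decomposition $G=G_i^{(S)}$ from the coverage-and-uniqueness part of the proof of Theorem~\ref{th1}. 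It gives $\lambda(n,k)=|\mathcal{S}_{n-1,k}|\cdot\lambda(n-1,k)$ with $\lambda(1,k)=1$, which is also the identity already used in the complexity analysis.

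Second, I will count: $|\mathcal{S}_{v-1,k}|=\sum_{i=0}^{\min\{v-1,k\}}\binom{v-1}{i}=\sum_{i=0}^{k}\binom{v-1}{i}$, because $\binom{v-1}{i}=0$ for $i>v-1$. The number of graphs is therefore the product over $v$ of these counts, which is the stated formula. For the base case $v=1$, the factor is $\binom{0}{0}=1$.

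There is no real obstacle here. The only points needing care are the convention $\binom{a}{i}=0$ for $i>a$, which lets the upper limit be $k$ rather than $\min\{v-1,k\}$, and the fact that the hypothesis $k\le n-1$ plays no role beyond matching the paper's standing assumption. When $k$ is small relative to $v$ the factor is a genuine partial sum of binomial coefficients. When $v-1\le k$ it equals $2^{v-1}$. One could remark that the product therefore splits as $2^{\binom{k+1}{2}}\prod_{v=k+2}^{n}\sum_{i=0}^{k}\binom{v-1}{i}$.
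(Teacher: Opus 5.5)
Your proposal is correct and follows essentially the paper's argument: the paper also has each vertex $v$ make an independent choice of back-neighbourhood $S_v \subseteq \{1,\dots,v-1\}$ with $|S_v|\le k$, counts $\sum_{i=0}^{k}\binom{v-1}{i}$ such subsets, and multiplies by the rule of product. Your explicit injectivity/surjectivity check and your note on the convention $\binom{v-1}{i}=0$ for $i>v-1$ make precise what the paper leaves implicit.
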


\begin{proof}
We count the number of $k$-degenerate graphs by following the construction process.

For each vertex $v \in \{1,2,\dots,n\}$, edges may only be added between $v$ and vertices in the set $\{1,2,\dots,v-1\}$. At step $v$, the vertex $v$ chooses a subset $S_v \subseteq \{1,2,\dots,v-1\}$ with $|S_v| \le k$. The number of such subsets is $\sum_{i=0}^{k} \binom{v-1}{i}$, since exactly $\binom{v-1}{i}$ subsets have size $i$. The choices made by different vertices are independent, so by the rule of product the total number of graphs is $\prod_{v=1}^{n} \sum_{i=0}^{k} \binom{v-1}{i}$.
\end{proof}

Theorem~\ref{enu} also yields an alternative proof of Lemma~\ref{od}.

\begin{proof}[Alternative proof of Lemma~\ref{od}]
Theorem~\ref{enu} gives $\lambda(n,k) = \prod_{v=1}^{n} \sum_{i=0}^{k} \binom{v-1}{i}$.

For $n = 1$ the product has a single factor, so $\lambda(1,k) = \binom{0}{0} = 1$. For $n \ge 2$ and $k = 0$, every factor is $\binom{v-1}{0} = 1$, hence $\lambda(n,0) = 1$. Finally, for $n \ge 2$ and $k \ge 1$, the factor at $v = 2$ equals $\binom{1}{0} + \binom{1}{1} = 2$, an even number, and as all remaining factors are positive integers, the whole product, and thus $\lambda(n,k)$, is even.
\end{proof}

\section{Ranking and Unranking}
\label{ru}

In combinatorial enumeration, ranking yields a unique index for every combinatorial object in a fixed ordered collection; unranking reconstructs the object from the index. The Gray code construction in Section~2 shows that the order of well-ordered $k$-degenerate graphs is closely related to the order of subsets in $\mathcal{S}_{n-1,k}$, which corresponds naturally to the pruned BRGC order on $\mathcal{B}_{n-1,k}$. To rank and unrank graphs in this Gray code order, we first study the ranking and unranking algorithms for binary strings of length $n$ with weight at most $k$ under the pruned BRGC order. Python implementations are provided in Appendix~B.

\subsection{Ranking and Unranking Binary Strings of Length $n$ with Weight $\le k$ under the Pruned BRGC Order}

Let $\mathcal{B}_{n,k}$ denote the set of binary strings of length $n$ with weight at most $k$, and let $C(m,K)=\sum_{i=0}^{K}\binom{m}{i}$ for $m\ge 0$ and $K\ge 0$. A simple but useful property is that ranking and unranking in the reverse Gray code order follow directly from the corresponding operations in the Gray code order. If $r_{\rm Gray}(s)$ is the rank of $s$ in the Gray code order of $\mathcal{B}_{n,k}$, then its rank in the reverse order is $|\mathcal{B}_{n,k}| - r_{\rm Gray}(s) + 1$. Hence, given a reverse rank $r_{\rm rev}$, the corresponding string is obtained by unranking at Gray code index $|\mathcal{B}_{n,k}| - r_{\rm rev} + 1$.

\subsubsection{Ranking for Binary Strings of Length $n$ with Weight $\le k$ under the Pruned BRGC Order}

Given a string $s \in \mathcal{B}_{n,k}$, we determine its rank by processing bits from right to left. For each position $i$, after fixing the suffix $s_{i+1:n} = s_{i+1}\dots s_n$, define $m_i = i-1$, $K_i = k - \sum_{j=i+1}^n s_j$, and $D_i = 1 + \sum_{j=i+1}^n s_j$; here $m_i$ is the number of bits to the left of $s_i$, $K_i$ is the number of ones that may still be placed among the first $i$ bits, and $D_i$ determines the Gray code direction at position $i$.

At each position $i$, the strings sharing the suffix $s_{i+1:n}$ partition into two blocks: those with $s_i = 1-p$ appear first, followed by those with $s_i = p$, where $p = D_i \bmod 2$. Hence, if $s_i = p$, all strings in the preceding block must be counted. Iterating over all positions, the rank accumulates as
$$
r \;=\; 1 \;+\; \sum_{i=1}^n \mathbf{1}_{\{s_i = D_i \bmod 2\}}\, C(m_i, K_i - 1 + s_i),
$$
where $C(m,K) = \sum_{t=0}^{K} \binom{m}{t}$ denotes the number of binary strings of length $m$ with weight at most $K$.

\begin{algorithm}[h]
\caption{Ranking for Binary Strings of Length $n$ with Weight $\le k$ under the Pruned BRGC Order}
\label{alg:ranking}
\begin{algorithmic}[1]
\Procedure{Ranking}{$s=s_1\dots s_n,\ k$}
    \State $m \gets n$
    \State $K \gets k$
    \State $D \gets 1$
    \State $r \gets 1$
    \For{$i \gets n, n-1, \dots, 1$}
        \State $m \gets m-1$
        \State $p \gets D \bmod 2$
        \If{$s_i = p$}
            \State $r \gets r + C(m, K - 1 + s_i)$
        \EndIf
        \State $D \gets D + s_i$
        \State $K \gets K - s_i$
    \EndFor
    \State \Return $r$
\EndProcedure
\end{algorithmic}
\end{algorithm}

Algorithm~\textsc{Ranking} implements the ranking procedure described above. Its correctness and complexity are established by Theorems~\ref{rank} and~\ref{crank}, respectively.

\begin{theorem}\label{rank}
Let $G = (s^{(1)},\dots,s^{(|\mathcal{B}_{n,k}|)})$ be the sequence of binary strings in the pruned BRGC order. Then Algorithm~\textsc{Ranking} correctly computes, for any $s \in \mathcal{B}_{n,k}$, its rank in $G$, namely
$$
r(s) \;=\; 1 \;+\; \sum_{i=1}^n \mathbf{1}_{\{s_i = D_i \bmod 2\}}\, C(m_i, K_i - 1 + s_i),
$$
where $m_i = i-1$ is the number of bits to the left of $s_i$, $K_i = k - \sum_{j=i+1}^n s_j$ is the number of ones that may still be placed among the first $i$ bits, and $D_i = 1 + \sum_{j=i+1}^n s_j$ determines the Gray code direction at position $i$.
\end{theorem}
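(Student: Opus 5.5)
The plan is to prove the formula by a block-decomposition argument on the pruned BRGC, processing positions from $n$ down to $1$ exactly as Algorithm~\textsc{Ranking} does. The structural fact I need is the following.

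\emph{Claim.} For every $i$ and every fixed suffix $t=s_{i+1}\dots s_n$, the strings of $\text{BRGC}(n)$ ending in $t$ form a contiguous interval. On their first $i$ bits, this interval is listed as $\text{BRGC}(i)$ if $\mathrm{wt}(t)$ is even, and as $\overline{\text{BRGC}}(i)$ if $\mathrm{wt}(t)$ is odd.

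I would prove the Claim by induction on $n-i$ using the global recursion $\text{BRGC}(n)=(\text{BRGC}(n-1)\cdot 0,\ \overline{\text{BRGC}}(n-1)\cdot 1)$. Fixing $s_n=0$ selects the first half, which is an unreversed copy. Fixing $s_n=1$ selects the second half, which is reversed. Reversing a list keeps contiguous intervals contiguous and flips their orientation. So each further $1$ in the suffix toggles the orientation, which gives the parity rule.

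Pruning strings of weight larger than $k$ deletes elements but preserves relative order. Hence in the pruned BRGC the strings with suffix $t$ still form a contiguous block, ordered as the pruned $\text{BRGC}(i)$ or its reverse. Inside this block, the top bit $s_i$ splits it into two sub-blocks. In $\text{BRGC}(i)$ the strings with $s_i=0$ come first, and in the reversal the strings with $s_i=1$ come first. With $D_i=1+\mathrm{wt}(t)$ and $p=D_i \bmod 2$, the second sub-block is therefore the one with $s_i=p$: when $\mathrm{wt}(t)$ is even, $p=1$ and $1$ comes second, and symmetrically when it is odd. The sub-block with $s_i=x$ consists of the strings of length $i-1$ of weight at most $K_i-x$, so its size is $C(m_i,K_i-x)$. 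I adopt the convention $C(m,K)=0$ for $K<0$.

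Next comes the telescoping induction. Let $\rho_i$ be the position of $s$ inside the block determined by the suffix $s_{i+1}\dots s_n$; then $\rho_n=r(s)$ and $\rho_0=1$. By the previous paragraph, $\rho_i=\rho_{i-1}$ if $s_i$ lies in the first sub-block. If $s_i=p$, then $\rho_i=\rho_{i-1}+C(m_i,K_i-(1-s_i))$, because the preceding sub-block has top bit $1-p=1-s_i$. Summing over $i$ yields exactly the stated formula. Finally, I would check that Algorithm~\textsc{Ranking} maintains $m=i-1$, $K=K_i$ and $D=D_i$ at step $i$.

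The main obstacle is getting the orientation and parity convention exactly right, that is, confirming that the sub-block with $s_i=p$ is the later one rather than the earlier one. I would first check this against the base case: with an empty suffix, $D_n=1$ and $p=1$, and indeed $s_n=1$ comes second in the global recursion. I would also verify it against the worked listing for $n=4$, $k=2$. The boundary case $K_i=0$, where the sub-block with $s_i=1$ is empty, is handled by the convention $C(m,-1)=0$.
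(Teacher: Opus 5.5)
Your proposal is correct and follows the same right-to-left block decomposition as the paper's proof. At each position the block sharing the suffix splits into a first sub-block with $s_i=1-p$ and a second with $s_i=p$, and $C(m_i,K_i-1+s_i)$ is added exactly when $s_i=p$. Beyond the paper, you justify the contiguity and parity-determined orientation of these blocks via the global BRGC recursion and the fact that pruning preserves relative order, and you make the needed convention $C(m,-1)=0$ explicit; the paper uses both without proof.
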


\begin{proof}
Fix a string $s \in \mathcal{B}_{n,k}$ and a position $i$, and suppose the suffix $s_{i+1:n}$ is already fixed. Among the strings sharing this suffix, the pruned BRGC order lists first the block with $s_i = 1-p$ and then the block with $s_i = p$, where $p = D_i \bmod 2$. In the first block the $i$-th bit is fixed to $1-p$, so the number of ones still available for the remaining $m_i$ bits is $K_i - (1-p)$, which equals $K_i - 1 + s_i$ whenever $s_i = p$. Hence, if $s_i = p$, exactly $C(m_i, K_i - 1 + s_i)$ strings of the first block precede $s$, and this is exactly the quantity added to $r$ in that iteration; if $s_i = 1-p$, then $s$ itself lies in the first block, and nothing is added. Summing over all positions $i$ from $n$ down to $1$ therefore accumulates exactly the number of strings preceding $s$ in $G$, and the initial value $r = 1$ accounts for $s$ itself, so Algorithm~\textsc{Ranking} returns the rank of $s$.
\end{proof}

\begin{theorem}\label{crank}
After $O(nk)$-time preprocessing, Algorithm~\textsc{Ranking} ranks a binary string of length $n$ with weight at most $k$ in the pruned BRGC order in $O(n)$ time and $O(nk)$ space.
\end{theorem}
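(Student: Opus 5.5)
The plan is to precompute a table of all the values $C(m,K)$ that Algorithm~\textsc{Ranking} can request, and then to observe that the main loop performs only constant work per bit. Every call has the form $C(m, K-1+s_i)$. Here $m$ runs over $\{0,\dots,n-1\}$. The second argument satisfies $K-1+s_i \le k$, since $K\le k$ and, when $s_i=1$, $K$ has not yet been decremented. It also satisfies $K-1+s_i\ge -1$, because $K\ge 0$ whenever $s\in\mathcal{B}_{n,k}$: the suffix weight never exceeds $k$. We therefore build a table $T[m][K]$ for $0\le m\le n-1$ and $-1\le K\le k$, with $T[m][-1]=0$, which is the number of strings of length $m$ with weight at most $-1$.

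To fill the table in $O(nk)$ time, we use the Pascal-type recurrence for partial row sums of binomial coefficients:
\[
C(m,K) = C(m-1,K) + C(m-1,K-1) \quad (m\ge 1,\ K\ge 0).
\]
It follows from $\binom{m}{t}=\binom{m-1}{t}+\binom{m-1}{t-1}$ summed over $t\le K$. Equivalently, a string of length $m$ with weight at most $K$ either ends in $0$ with the rest of weight at most $K$, or ends in $1$ with the rest of weight at most $K-1$. The initial conditions are $C(0,K)=1$ for $K\ge 0$ and $C(m,-1)=0$. Each of the $n(k+2)$ entries then costs $O(1)$ given earlier entries, so preprocessing takes $O(nk)$ time and the table occupies $O(nk)$ space. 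This assumes a unit-cost RAM on integers of size up to $2^n$, the same word-model convention the paper uses for the generation analysis.

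For the ranking phase, Algorithm~\textsc{Ranking} executes its loop exactly $n$ times. Each iteration does a decrement, a parity test $D \bmod 2$, a comparison, at most one table lookup and one addition, and two updates of $D$ and $K$. Each of these is $O(1)$, giving $O(n)$ time. Beyond the table, the extra space is a constant number of integers plus the input string, so the total space stays $O(nk)$. Correctness of the returned value is already given by Theorem~\ref{rank}, so nothing further is needed there.

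The only delicate step is checking that every index requested in the loop lies inside the precomputed range, in particular the boundary case $K-1+s_i=-1$, which occurs when $K=0$ and $s_i=0$. This is handled by including the column $K=-1$, or equivalently by treating $C(m,-1)=0$ as a guard. Beyond that, the argument is a direct operation count.
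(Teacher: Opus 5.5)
Your proof is correct and follows the paper's argument. You precompute the table of values $C(m,K)$ in $O(nk)$ time and space, then note that each of the $n$ loop iterations does constant work with at most one table lookup. You also add detail the paper leaves implicit: the recurrence $C(m,K)=C(m-1,K)+C(m-1,K-1)$, which makes each table entry cost $O(1)$, and the check that every requested index, including the boundary case $K-1+s_i=-1$, lies in the table.
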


\begin{proof}
The values $C(m,K)=\sum_{t=0}^{K}\binom{m}{t}$ for all $m\le n$ and $K\le k$ are precomputed once, before the algorithm is invoked, in $O(nk)$ time by filling the $(n+1)\times(k+1)$ table entry by entry. The algorithm then consists of a single loop over the $n$ positions. In each iteration, it performs only the following constant number of operations: decrementing $m$, computing $p = D \bmod 2$, testing whether $s_i = p$, conditionally adding $C(m, K-1+s_i)$ to $r$ via a single table lookup, and updating $D$ and $K$. Hence every iteration takes constant time, and the overall running time is $O(n)$.

The precomputed table stores $(n+1)(k+1)$ values, which is $O(nk)$ space. The input string $s$ and the variables $m$, $K$, $D$, $r$, $i$, and $p$ contribute $O(n)$ space. The total space is therefore $O(nk)$.
\end{proof}

\medskip
We now illustrate Algorithm~\textsc{Ranking} for $n=4$ and $k=2$. The corresponding pruned BRGC order of $\mathcal{B}_{4,2}$ (the 11 strings listed in Section~2) is

\begin{center}
0000, 1000, 1100, 0100, 0110, 1010, 0010, 0011, 0101, 1001, 0001.
\end{center}

We compute the rank of $s = 1001$.

\smallskip
\noindent Initialize $m=4$, $K=2$, $D=1$, $r=1$.

\smallskip
\noindent\textbf{Position 4:} $s_4=1$.
$m=3$, $p = D \bmod 2 = 1$.
$s_4 = p$, so we add $C(3, K-1+s_4) = C(3,2) = 7$.
$r = 1+7 = 8$.

\smallskip
Update: $D = D+s_4 = 2$, $K = K-s_4 = 1$.

\smallskip
\noindent\textbf{Position 3:} $s_3=0$.
$m=2$, $p = D \bmod 2 = 0$.
$s_3 = p$, so we add $C(2, K-1+s_3) = C(2,0) = 1$.
$r = 8+1 = 9$.

\smallskip
Update: $D = D+s_3 = 2$, $K = K-s_3 = 1$.

\smallskip
\noindent\textbf{Position 2:} $s_2=0$.
$m=1$, $p = D \bmod 2 = 0$.
$s_2 = p$, so we add $C(1, K-1+s_2) = C(1,0) = 1$.
$r = 9+1 = 10$.

\smallskip
Update: $D = D+s_2 = 2$, $K = K-s_2 = 1$.

\smallskip
\noindent\textbf{Position 1:} $s_1=1$.
$m=0$, $p = D \bmod 2 = 0$.
$s_1 \neq p$, no addition. $r = 10$.

\smallskip
Thus $\mathrm{rank}(1001) = 10$.

\subsubsection{Unranking for Binary Strings of Length $n$ with Weight $\le k$ under the Pruned BRGC Order}

Given a rank $r \in \{1,\dots,|\mathcal{B}_{n,k}|\}$, we reconstruct the corresponding string $s \in \mathcal{B}_{n,k}$ by determining its bits from right to left. Let $q=r-1$ be the number of strings preceding $s$ in the pruned BRGC order, and define $m_i$, $K_i$, and $D_i$ as in the ranking procedure.

At each position $i$, the strings sharing the suffix $s_{i+1:n}$ partition into two blocks: those with $s_i = 1-p$ appear first, followed by those with $s_i = p$, where $p = D_i \bmod 2$. The first block contains $C(m_i, K_i - (1-p))$ strings. If $q \ge C(m_i, K_i - (1-p))$, then $s$ lies in the second block: we set $s_i = p$ and subtract this count from $q$; otherwise $s$ lies in the first block, and we set $s_i = 1-p$. Iterating from $i=n$ down to $1$ determines each bit $s_i$, and after the final step the string $s$ is fully reconstructed.

\begin{algorithm}[h]
\caption{Unranking for Binary Strings of Length $n$ with Weight $\le k$ under the Pruned BRGC Order}
\label{alg:unranking}
\begin{algorithmic}[1]
\Procedure{Unranking}{$n, k, r$}
    \State $s \gets 00\dots 0$
    \State $D \gets 1$
    \State $K \gets k$
    \State $q \gets r - 1$
    \For{$i \gets n, n-1, \dots, 1$}
        \State $p \gets D \bmod 2$
        \If{$q \ge C(i-1, K - 1 + p)$}
            \State $s_i \gets p$
            \State $q \gets q - C(i-1, K - 1 + p)$
        \Else
            \State $s_i \gets 1 - p$
        \EndIf
        \State $K \gets K - s_i$
        \State $D \gets D + s_i$
    \EndFor
    \State \Return $s$
\EndProcedure
\end{algorithmic}
\end{algorithm}

Algorithm~\textsc{Unranking} implements the unranking procedure described above. Its correctness and complexity are established in Theorems~\ref{unrank} and~\ref{cunrank}, respectively.

\begin{theorem}\label{unrank}
Let $G = (s^{(1)},\dots,s^{(|\mathcal{B}_{n,k}|)})$ be the sequence of binary strings in the pruned BRGC order. Then Algorithm~\textsc{Unranking} correctly returns, for any rank $r$, the unique string $s \in \mathcal{B}_{n,k}$ with rank $r$ in $G$.
\end{theorem}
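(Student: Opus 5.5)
The plan is to prove correctness of Algorithm~\textsc{Unranking} by showing that it inverts Algorithm~\textsc{Ranking}, using the block structure already established in the proof of Theorem~\ref{rank}. I would argue by induction on the positions processed, from $i=n$ down to $i=1$. The invariant is as follows. Let $s^\ast$ be the string of rank $r$ in $G$. Before processing position $i$, the bits $s_{i+1},\dots,s_n$ chosen so far coincide with the suffix of $s^\ast$. Moreover, $K$ and $D$ equal $K_i$ and $D_i$ for that suffix, and $q$ equals the number of strings with that suffix that precede $s^\ast$ in $G$. Initially the suffix is empty, $K=k$, $D=1$ and $q=r-1$, so the invariant holds.

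For the inductive step I use the fact from the proof of Theorem~\ref{rank}: the strings sharing the suffix $s_{i+1:n}$ form a contiguous segment of $G$. This segment splits into a first block with $s_i=1-p$ and a second block with $s_i=p$, where $p=D_i \bmod 2$. The first block has exactly $C(i-1,K_i-(1-p))=C(i-1,K_i-1+p)$ elements, which is the threshold tested in the algorithm.

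There are two cases. If $q\ge C(i-1,K_i-1+p)$, then $s^\ast$ is not in the first block, so $s^\ast_i=p$. The number of strings with the longer suffix $s_{i:n}$ preceding $s^\ast$ is then $q$ minus the first-block size, which is exactly the algorithm's update. Otherwise $s^\ast_i=1-p$ and $q$ is unchanged. In both cases the updates $K\gets K-s_i$ and $D\gets D+s_i$ yield $K_{i-1}$ and $D_{i-1}$, so the invariant is restored. After $i=1$ the whole string is determined and equals $s^\ast$; uniqueness follows because $G$ lists each string once.

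The main obstacle is the boundary case where $K_i-1+p<0$, that is, $K_i=0$ and $p=0$. Here the block with $s_i=1$ is empty, and $C(i-1,-1)$ must be read as $0$. I would adopt the convention $C(m,K)=0$ for $K<0$. Then $q\ge 0$ always forces $s_i=p=0$, which is correct since no further ones are allowed, and no count is subtracted. When $K_i=0$ and $p=1$, the first block consists of $s_i=0$ alone, with size $C(i-1,0)=1$. Since $q<1$ within a segment of size one, the algorithm correctly sets $s_i=0$.

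It also remains to check that $q$ stays within the size of the current segment, so the final string is well defined. This follows from the invariant and from $1\le r\le |\mathcal{B}_{n,k}|=C(n,k)$. Alternatively, I would observe that the algorithm is the exact inverse of the sum in Theorem~\ref{rank}: each subtracted term equals the indicator-weighted term added there, so $\textsc{Ranking}(\textsc{Unranking}(r))=r$. Injectivity of ranking then completes the proof.
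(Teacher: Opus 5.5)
Your proposal is correct and follows the same route as the paper. Both argue by induction over positions from $n$ down to $1$, splitting each suffix class into a first block with $s_i=1-p$ of size $C(i-1,K_i-1+p)$, followed by the block with $s_i=p$. Your explicit invariant on $q$ and your convention $C(m,K)=0$ for $K<0$ make precise two points the paper's proof leaves implicit; the paper handles the negative-$K$ case only in the \texttt{count} function of its Python code.
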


\begin{proof}
Fix a position $i$, and suppose the suffix $s_{i+1:n}$ has already been determined. Among the strings sharing this suffix, the pruned BRGC order lists first the block with $s_i = 1-p$ and then the block with $s_i = p$, where $p = D_i \bmod 2$. The first block contains $C(m_i, K_i - (1-p))$ strings, and the algorithm sets $s_i = p$ exactly when $q \ge C(m_i, K_i - (1-p))$, subtracting this size from $q$ in that case. Hence, after processing position $i$, the last $n-i+1$ bits of the output form exactly the suffix of the string with rank $r$ in $G$. By induction over the positions from $n$ down to $1$, all bits are determined correctly, and since the pruned BRGC order lists every string of $\mathcal{B}_{n,k}$ exactly once, the reconstructed string is unique.
\end{proof}

\begin{theorem}\label{cunrank}
After $O(nk)$-time preprocessing, Algorithm~\textsc{Unranking} reconstructs a binary string of length $n$ with weight at most $k$ in the pruned BRGC order in $O(n)$ time and $O(nk)$ space.
\end{theorem}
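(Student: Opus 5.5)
The argument mirrors the proof of Theorem~\ref{crank}, applied to Algorithm~\textsc{Unranking} in place of Algorithm~\textsc{Ranking}.

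\textbf{Preprocessing.} We use the same table of values $C(m,K)=\sum_{t=0}^{K}\binom{m}{t}$ for $0\le m\le n$ and $0\le K\le k$. It has $(n+1)(k+1)$ entries and is filled entry by entry in $O(nk)$ time via the recurrences $C(m,K)=C(m-1,K)+C(m-1,K-1)$ and $C(m,0)=1$. We adopt the convention $C(m,-1)=0$, which is returned without a table access. A look-up with $K>m$ is clamped to $C(m,m)=2^m$, or equivalently stored directly in the table.

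\textbf{Range of the look-up index.} The only subtle point is checking that every query $C(i-1,K-1+p)$ made in the loop hits a valid table entry. Since $p\in\{0,1\}$ and $K$ never exceeds $k$, the second argument lies between $-1$ and $k$. The value $K$ never becomes negative along a valid unranking. This is guaranteed by the correctness argument of Theorem~\ref{unrank}: if $K=0$, then the first block has size $C(i-1,-1+p)$, which is $0$ when $p=0$, and the algorithm is then forced to set $s_i=p=0$. Hence no table access lies out of range, and each look-up costs constant time.

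\textbf{Time and space.} The loop runs over the $n$ positions from $i=n$ down to $1$. Each iteration performs a constant number of operations: computing $p=D\bmod 2$, one table look-up, one comparison with $q$, a possible subtraction, assigning $s_i$, and updating $K$ and $D$. The total time after preprocessing is therefore $O(n)$. The table uses $O(nk)$ space. The output string $s$ uses $O(n)$ space, and the scalars $D$, $K$, $q$, $i$ and $p$ use $O(1)$ space. As in Theorem~\ref{crank}, arithmetic on the counts is assumed to take unit cost. The total space is thus $O(nk)$, which proves the claim.
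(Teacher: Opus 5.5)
Your proposal is correct and follows the paper's proof: precompute the $(n+1)\times(k+1)$ table of $C(m,K)$ once in $O(nk)$ time, then note that each of the $n$ loop iterations does a constant amount of work, which gives $O(n)$ time and $O(nk)$ space. Your explicit recurrence for filling the table and your check that every look-up index $K-1+p$ stays in $\{-1,\dots,k\}$ are details the paper leaves implicit. In that check, the case $K=0$, $p=1$ is covered by your appeal to Theorem~\ref{unrank}, via the invariant $q<C(i,K)=1$.
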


\begin{proof}
After precomputing, once, the values $C(m,K)=\sum_{t=0}^{K}\binom{m}{t}$ for all $m\le n$ and $K\le k$ in $O(nk)$ time, the algorithm consists of a single loop over the $n$ positions. In each iteration, it performs a single table lookup for $C(i-1, K-1+p)$, compares $q$ with this block size, conditionally assigns $s_i$ and updates $q$, and updates the parameters $K$ and $D$; all of these take constant time. The overall running time is therefore $O(n)$.

The precomputed table has $O(nk)$ entries, and the output string $s$ together with the variables $D$, $q$, $K$, $i$, and $p$ contributes $O(n)$ space. The total space is therefore $O(nk)$.
\end{proof}

\medskip
 We illustrate Algorithm~\textsc{Unranking} on the same parameter set $n=4$, $k=2$ used in the ranking example above. Recall that the pruned BRGC order of $\mathcal{B}_{4,2}$ is
\begin{center}
0000, 1000, 1100, 0100, 0110, 1010, 0010, 0011, 0101, 1001, 0001.
\end{center}
We reconstruct the string at rank $r = 10$.

\smallskip
\noindent Initialize $D=1$, $q = r-1 = 9$, $K=2$.

\smallskip
\noindent\textbf{Position 4:} $m=3$, $p = D \bmod 2 = 1$, first block size $C(3, K-1+p) = C(3,2) = 7$.
Since $q \ge 7$, the target lies in the second block: $s_4 = p = 1$.

\smallskip
Update: $q = 2$, $K = 1$, $D = 2$.

\smallskip
\noindent\textbf{Position 3:} $m=2$, $p = D \bmod 2 = 0$, first block size $C(2, K-1+p) = C(2,0) = 1$.
Since $q \ge 1$, the target lies in the second block: $s_3 = p = 0$.

\smallskip
Update: $q = 1$, $K = 1$, $D = 2$.

\smallskip
\noindent\textbf{Position 2:} $m=1$, $p = D \bmod 2 = 0$, first block size $C(1, K-1+p) = C(1,0) = 1$.
Since $q \ge 1$, the target lies in the second block: $s_2 = p = 0$.

\smallskip
Update: $q = 0$, $K = 1$, $D = 2$.

\smallskip
\noindent\textbf{Position 1:} $m=0$, $p = D \bmod 2 = 0$, first block size $C(0, K-1+p) = C(0,0) = 1$.
Since $q < 1$, the target lies in the first block: $s_1 = 1-p = 1$.

\smallskip
Thus the algorithm returns $s = 1001$, which indeed has rank $10$ in the pruned BRGC order.

\subsection{Ranking and Unranking for well-ordered $k$-degenerate Graphs}

We now apply Algorithm~\textsc{Ranking} and Algorithm~\textsc{Unranking} for binary strings, described above, to rank and unrank well-ordered $k$-degenerate graphs in our cyclic pivot Gray code. We denote this Gray code for well-ordered $k$-degenerate graphs on $n$ vertices by $\mathbf{C}_k(n)$, and for a graph $G \in \mathbf{C}_k(n)$ we write $r(G)$ for its rank in $\mathbf{C}_k(n)$. Python implementations are provided in Appendix~C for ranking and in Appendix~D for unranking.

\subsubsection{Ranking for well-ordered $k$-degenerate Graphs}

\medskip
Given a graph $G \in \mathbf{C}_k(n)$, the following procedure computes its rank.

\begin{enumerate}
   \item \textbf{Base case:} If $n=1$, then $G$ is the trivial graph, and its rank is $1$.

    \item \textbf{Decompose the graph:} Remove vertex $n$ and its incident edges to obtain the base graph $G' = (\{1,\dots,n-1\}, \{(u,v)\in G \mid u\neq n, v\neq n\}) \in \mathbf{C}_k(n-1)$, and let $S$ be the set of neighbors of $n$, so $|S| \le k$.

   \item \textbf{Compute base graph rank:} Recursively compute $i = r(G')$, the rank of the base graph $G'$ in $\mathbf{C}_k(n-1)$.

  \item \textbf{Compute neighbor set rank:} Under the natural bijection between $\mathcal{S}_{n-1,k}$ and $\mathcal{B}_{n-1,k}$, let $b$ be the binary string corresponding to $S$, and apply Algorithm~\textsc{Ranking} to $b$; this yields the rank $j$ of $b$, taken in the forward pruned BRGC order if $i$ is odd and in the reverse order if $i$ is even, where the reverse rank is $|\mathcal{B}_{n-1,k}| - r_{\mathrm{Gray}}(b) + 1$ and $r_{\mathrm{Gray}}(b)$ denotes the rank of $b$ in the forward pruned BRGC order.

    \item \textbf{Combine ranks:} Each base graph has $M =\sum_{t=0}^{k} \binom{n-1}{t}$ possible extensions, and the rank of $G$ is $r(G) = (i-1) \cdot M + j$.
\end{enumerate}

\begin{algorithm}[h]
\caption{Ranking for well-ordered $k$-degenerate Graphs}
\label{alg:graph-ranking}
\begin{algorithmic}[1]
\Procedure{GraphRanking}{$G, n, k$}
    \If{$n = 1$}
        \State \Return 1
    \EndIf

    \State $G' \gets (\{1,\dots,n-1\}, \{(u,v)\in G \mid u\neq n, v\neq n\})$
    \State $S \gets \{ v \mid (v,n)\in G \}$
    \State $b \gets (b_1,\dots,b_{n-1})$ with $b_j \gets 1$ if $j \in S$, else $0$

    \State $i \gets \Call{GraphRanking}{G', n-1, k}$

    \If{$i$ is odd}
        \State $j \gets \Call{Ranking}{b,\ k}$
    \Else
        \State $j \gets |\mathcal{B}_{n-1,k}| - \Call{Ranking}{b,\ k} + 1$
    \EndIf

    \State $M \gets \sum_{t=0}^{k} \binom{n-1}{t}$

    \State \Return $(i-1)\cdot M + j$
\EndProcedure
\end{algorithmic}
\end{algorithm}

Algorithm~\textsc{GraphRanking} implements the recursive ranking procedure
described above. Its correctness and complexity are established by
Theorems~\ref{grank} and~\ref{cgrank}, respectively.

\begin{theorem}\label{grank}
The algorithm \textsc{GraphRanking} correctly computes the rank of any well-ordered $k$-degenerate graph $G$ in the cyclic pivot Gray code.
\end{theorem}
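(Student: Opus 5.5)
The proof will be by induction on $n$, following the recursive structure of \textsc{GenWOD} and of the ordering $\mathbf{C}_k(n)$ built in the proof of Theorem~\ref{th1}. For the base case $n=1$, $\mathbf{C}_k(1)$ consists of the single trivial graph, and the algorithm returns $1$. For $n\ge 2$, we assume that \textsc{GraphRanking} returns the correct rank $r(G')$ for every graph $G'\in\mathbf{C}_k(n-1)$, and we show that the returned value $(i-1)M+j$ equals $r(G)$.

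First, we make the block structure of $\mathbf{C}_k(n)$ explicit. By the construction in Theorem~\ref{th1}, $\mathbf{C}_k(n)$ is the concatenation $\mathbf{F}_1,\mathbf{F}_2,\dots,\mathbf{F}_m$, where $G_1,\dots,G_m$ is the listing $\mathbf{C}_k(n-1)$. Each $\mathbf{F}_t$ contains exactly the graphs $G_t^{(S)}$ with $S\in\mathcal{S}_{n-1,k}$. Hence every block has size $M=|\mathcal{S}_{n-1,k}|=\sum_{t=0}^{k}\binom{n-1}{t}$. Inside $\mathbf{F}_t$ the subsets are listed in the pruned BRGC order transported via the natural bijection with $\mathcal{B}_{n-1,k}$, forward if $t$ is odd and reversed if $t$ is even. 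This matches \textsc{GenWOD}, where the direction variable $d$ starts at $0$ and toggles after each base graph.

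Next, given $G$, the decomposition step recovers $G'=G_i$ (the induced subgraph on $\{1,\dots,n-1\}$) and $S=N(n)$. The uniqueness argument at the end of the proof of Theorem~\ref{th1} gives $G=G_i^{(S)}$ for exactly one pair $(i,S)$, so $G$ lies in block $\mathbf{F}_i$, where $i=r(G')$ is correct by the induction hypothesis. The blocks $\mathbf{F}_1,\dots,\mathbf{F}_{i-1}$ precede it and contribute $(i-1)M$ graphs. The position of $G$ within $\mathbf{F}_i$ is the rank of $b$ in the appropriately oriented pruned BRGC. When $i$ is odd, this rank is $\textsc{Ranking}(b,k)$, correct by Theorem~\ref{rank}. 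When $i$ is even, it is $M-\textsc{Ranking}(b,k)+1$, by the reverse-order observation at the start of Section~\ref{ru}, applied with $|\mathcal{B}_{n-1,k}|=M$. Adding the two contributions gives $r(G)=(i-1)M+j$.

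The main obstacle is the bookkeeping rather than any deep step. We must confirm that the parity convention of the algorithm (forward exactly when the $1$-indexed rank $i$ is odd) agrees with \textsc{GenWOD}, where $d=0$ yields direction $(-1)^0$, the forward direction, for the first base graph. We must also confirm that the bit-indexing of $b$ (with $b_j=1$ iff $j\in S$) is the same bijection used when generating the subsets, so that Theorem~\ref{rank} applies to precisely the order used in $\mathbf{C}_k(n)$. Checking these two conventions against the worked example of Figure~\ref{n=4,k=2} provides a sanity check.
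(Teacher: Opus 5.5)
Your proof is correct and follows essentially the same route as the paper: induction on $n$, with the extensions of $G'$ occupying the $M$ consecutive positions $(i-1)M+1,\dots,iM$ of $\mathbf{C}_k(n)$, and the offset $j$ obtained from \textsc{Ranking} in forward order when $i$ is odd and from the reverse-rank formula when $i$ is even. Your extra checks of the parity convention ($d=0$ giving the forward direction for the first base graph in \textsc{GenWOD}) and of the bit-to-vertex bijection only make explicit what the paper's proof takes for granted.
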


\begin{proof}
We prove by induction on $n$. For $n=1$ the only graph is the trivial graph, and the algorithm returns $1$, which is correct. Now assume the algorithm correctly ranks all graphs in $\mathbf{C}_k(n-1)$, and let $G \in \mathbf{C}_k(n)$ with base graph $G'$ and neighbor set $S$, $|S|\le k$. By the induction hypothesis, $i = r(G')$ is the rank of $G'$ in $\mathbf{C}_k(n-1)$. In $\mathbf{C}_k(n)$ the extensions of $G'$ occupy the consecutive positions $(i-1)M+1,\dots,iM$. Within this block, the subsets are listed in forward pruned BRGC order when $i$ is odd and in reverse order when $i$ is even; correspondingly, $j$ is computed by Algorithm~\textsc{Ranking} in the forward order when $i$ is odd, and by the reverse-order formula $j = |\mathcal{B}_{n-1,k}| - r_{\mathrm{Gray}}(b) + 1$ when $i$ is even, where $r_{\mathrm{Gray}}(b)$ denotes the rank of $b$ in the forward pruned BRGC order. Hence $(i-1)\cdot M + j$ is the position of $G$ in $\mathbf{C}_k(n)$.
\end{proof}

\begin{theorem}\label{cgrank}
The algorithm~\textsc{GraphRanking} ranks well-ordered $k$-degenerate graphs on $n$ vertices in $O(n^2)$ time and uses $O(n^2)$ space.
\end{theorem}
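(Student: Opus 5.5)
The plan is to unroll the recursion of \textsc{GraphRanking} and bound the cost of each level separately, in the same way as Theorems~\ref{crank} and~\ref{cunrank}. The recursion descends from $n$ to $1$, so there are $n$ levels. I will show that at level $m$ (a graph on $m$ vertices) the non-recursive work is $O(m)$. Summing over the levels then gives $\sum_{m=1}^{n} O(m)=O(n^2)$ time.

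First, as a preprocessing step, I fill the table $C(m,K)=\sum_{t=0}^{K}\binom{m}{t}$ for all $m\le n$ and $K\le k$. Since $k\le n-1$, this takes $O(nk)\subseteq O(n^2)$ time and space. The same table supplies $M=C(m-1,k)$ and $|\mathcal{B}_{m-1,k}|=C(m-1,k)$ by a single lookup at each level.

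Second, I bound the non-recursive work at level $m$ term by term:
\begin{enumerate}
\item Extracting the neighbor set $S$ of vertex $m$ and building the string $b$ means reading row $m$ of the adjacency matrix, which takes $O(m)$ time.
\item Forming $G'$ needs no copying. We simply view the leading $(m-1)\times(m-1)$ submatrix of the same $n\times n$ adjacency matrix, so the recursive call passes only the index $m-1$.
\item The call to \textsc{Ranking} on $b$ of length $m-1$ costs $O(m)$ by Theorem~\ref{crank}, given the shared table.
\item The parity test on $i$, the reverse-rank formula, and the combination $(i-1)M+j$ are each a constant number of arithmetic operations.
\end{enumerate}

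For space, the adjacency matrix takes $O(n^2)$ and the table takes $O(nk)$. The recursion stack has depth $n$, and each frame holds $O(1)$ scalars plus a string $b$ of length at most $n$. This is $O(n^2)$ in the worst case, or $O(n)$ if $b$ is freed or the recursion is turned into a loop. The total space is therefore $O(n^2)$.

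The main obstacle is justifying the constant-time arithmetic assumption. Ranks reach $\lambda(n,k)$, which can have $\Theta(nk\log n)$ bits. I will state explicitly that we work in the unit-cost RAM model, as the paper implicitly does for Theorem~\ref{crank}. A secondary point to handle carefully is making sure $G'$ is not physically copied at each level, since a copy would cost $O(m^2)$ per level and $O(n^3)$ in total.
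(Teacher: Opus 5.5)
Your proposal is correct and matches the paper's proof essentially step for step: $n$ recursion levels with $O(m)$ work at level $m$ (reading row $m$ plus one call to \textsc{Ranking} using the once-precomputed $O(nk)\subseteq O(n^2)$ table), no physical copy of $G'$, and $O(n^2)$ space dominated by the adjacency matrix. Your explicit unit-cost RAM caveat is a sensible addition that the paper leaves implicit, though in general the rank bit-length is only bounded by $O(nk\log n)$ rather than being $\Theta(nk\log n)$ (for $k=n-1$ there are $2^{\binom{n}{2}}$ graphs, so $\Theta(n^2)$ bits).
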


\begin{proof}
There are $n$ recursion levels, one per vertex from $n$ down to $1$. Consider level $m$, where the graph has $m$ vertices. Its cost consists of three parts: reading the $m$-th row of the adjacency matrix to decompose $G$ costs $O(m)$; ranking the neighbor set, which is a binary string of length $m-1$, costs $O(m-1)$ by Theorem~\ref{crank}; and everything else---the parity check, the computation of $M$, and the final combination $(i-1)\cdot M+j$---costs $O(1)$. The table used by Algorithm~\textsc{Ranking} is precomputed once and shared by all levels, at a total cost of $O(nk)\le O(n^2)$, which is dominated by the bound. Thus level $m$ costs $O(m)$, and summing over all levels gives $\sum_{m=1}^{n} O(m) = O(n^2)$.

The algorithm stores the input graph as a single $n\times n$ adjacency matrix, contributing $O(n^2)$ space, and the recursion passes only the base graph together with a constant number of parameters between levels, so no additional copies of the graph are ever created. Hence the total space complexity is $O(n^2)$.
\end{proof}

\smallskip

We now illustrate Algorithm~\textsc{GraphRanking} on the example from Figure~\ref{fig2}.

\begin{example}
Let $n=3$, $k=2$. Consider the graph $G$ with vertex set $\{1,2,3\}$ and edge set $\{(1,2), (1,3)\}$.

\begin{itemize}
    \item Base graph: $G' = (\{1,2\}, \{(1,2)\}) \in \mathbf{C}_2(2)$

    Rank of $G'$: $i=2$ (even).

    \item Neighbor set of vertex $3$: $S = \{1\}$. The pruned BRGC order of length $2$ is $00,10,11,01$, corresponding to subsets $\varnothing,\{1\},\{1,2\},\{2\}$. Since $i$ is even, the reverse order is used: $\{2\},\{1,2\},\{1\},\varnothing$. Hence $S = \{1\}$ is at position $j = 3$.

    \item Number of subsets: $M = \sum_{t=0}^{2} \binom{2}{t} = 1+2+1 = 4$.

    \item Total rank: $r(G) = (i-1) \cdot M + j = (2-1)\cdot 4 + 3 = 7$.
\end{itemize}

Thus the graph $G$ is ranked 7th in $\mathbf{C}_2(3)$.
\end{example}

\subsubsection{Unranking for well-ordered $k$-degenerate Graphs}

\medskip
Given a rank $x \in \{1,\dots,|\mathbf{C}_k(n)|\}$, the following procedure reconstructs the corresponding graph.

\begin{enumerate}
    \item \textbf{Base case:} If $n=1$, return the trivial graph $(\{1\}, \varnothing)$.

    \item \textbf{Determine group size:} For $n \ge 2$, each base graph in $\mathbf{C}_k(n-1)$ has $M = \sum_{t=0}^{k} \binom{n-1}{t}$ possible extensions, obtained by connecting vertex $n$ to a subset of $\{1,\dots,n-1\}$ of size at most $k$.

    \item \textbf{Decompose index:} Compute $i = \left\lfloor \frac{x-1}{M} \right\rfloor + 1$ and $j = x - (i-1)\cdot M$, so that $i$ is the index of the base graph and $j$ is the index of the neighbor set within its block.

    \item \textbf{Reconstruct base graph:} Recursively reconstruct the base graph $G'$, namely the graph with rank $i$ in $\mathbf{C}_k(n-1)$.

    \item \textbf{Reconstruct neighbor set:} If $i$ is odd, unrank $j$ in the forward pruned BRGC order by Algorithm~\textsc{Unranking}; if $i$ is even, unrank in the reverse order, that is, apply Algorithm~\textsc{Unranking} to the index $|\mathcal{B}_{n-1,k}| - j + 1$ in the forward order.

    \item \textbf{Construct final graph:} Set $E = E(G') \cup \{(v,n) \mid v \in S\}$ and $G = (\{1,\dots,n\}, E)$.
\end{enumerate}

\begin{algorithm}[h]
\caption{Unranking for well-ordered $k$-degenerate Graphs}
\label{alg:graph-unranking}
\begin{algorithmic}[1]
\Procedure{GraphUnranking}{$x, n, k$}
    \If{$n = 1$}
        \State \Return $(\{1\}, \varnothing)$
    \EndIf

    \State $M \gets \sum_{t=0}^{k} \binom{n-1}{t}$
    \State $i \gets \left\lfloor \dfrac{x-1}{M} \right\rfloor + 1$
    \State $j \gets x - (i-1)\cdot M$

    \State $G' \gets \Call{GraphUnranking}{i, n-1, k}$

    \If{$i$ is odd}
        \State $b \gets \Call{Unranking}{n-1, k, j}$
    \Else
        \State $b \gets \Call{Unranking}{n-1, k, |\mathcal{B}_{n-1,k}| - j + 1}$
    \EndIf
    \State $S \gets \{ v \mid b_v = 1 \}$

    \State $E \gets E(G') \cup \{(v,n) \mid v \in S\}$
    \State \Return $(\{1,\dots,n\}, E)$
\EndProcedure
\end{algorithmic}
\end{algorithm}

Algorithm~\textsc{GraphUnranking} implements the recursive unranking procedure
described above. Its correctness and complexity are established by
Theorems~\ref{gunrank} and~\ref{cgunrank}, respectively.

\begin{theorem}\label{gunrank}
The algorithm~\textsc{GraphUnranking} correctly reconstructs the well-ordered $k$-degenerate graph in $\mathbf{C}_k(n)$ with rank $x$ in the cyclic pivot Gray code.
\end{theorem}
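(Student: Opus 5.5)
We argue by induction on $n$, in the same way as Theorem~\ref{grank}, relying on the block structure of $\mathbf{C}_k(n)$ from Theorem~\ref{th1}. For $n=1$ the code $\mathbf{C}_k(1)$ contains only the trivial graph, so $x=1$ and the algorithm returns $(\{1\},\varnothing)$, which is correct. For $n\ge 2$ we assume that \textsc{GraphUnranking} correctly unranks every rank in $\mathbf{C}_k(n-1)$.

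The key structural fact comes from the construction in \textsc{GenWOD} and the proof of Theorem~\ref{th1}. The list $\mathbf{C}_k(n)$ is the concatenation $\mathbf{F}_1,\mathbf{F}_2,\dots,\mathbf{F}_m$, where $m=|\mathbf{C}_k(n-1)|$. Each block $\mathbf{F}_i$ has exactly $M=\sum_{t=0}^{k}\binom{n-1}{t}=|\mathcal{B}_{n-1,k}|$ members, all of the form $G_i^{(S)}$ with $G_i$ the $i$-th graph of $\mathbf{C}_k(n-1)$. Position $x$ therefore lies in block $i=\lfloor (x-1)/M\rfloor+1$, at offset $j=x-(i-1)M\in\{1,\dots,M\}$, and these are the values the algorithm computes. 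By the induction hypothesis, the recursive call returns $G'=G_i$.

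It remains to check that $S$ is the $j$-th subset within $\mathbf{F}_i$. For odd $i$ the block lists subsets in forward pruned BRGC order, so by Theorem~\ref{unrank} $\textsc{Unranking}(n-1,k,j)$ returns the string of forward rank $j$. For even $i$ the block lists subsets in reverse order. The $j$-th element in reverse order has forward rank $|\mathcal{B}_{n-1,k}|-j+1$, by the reverse-rank identity stated at the start of Section~\ref{ru}, and Theorem~\ref{unrank} again gives the correct string. Under the natural bijection between $\mathcal{B}_{n-1,k}$ and $\mathcal{S}_{n-1,k}$, $b$ corresponds to $S$, so the returned graph $(\{1,\dots,n\},E(G')\cup\{(v,n):v\in S\})$ equals $G_i^{(S)}$, which is the $x$-th graph of $\mathbf{C}_k(n)$.

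The main point needing care is the alignment of parity conventions. The algorithm's direction $d$ starts at $0$, so the forward direction is used for the first base graph, which has rank $1$ (odd). We must confirm that this matches "forward for odd $i$" in \textsc{GraphUnranking*}, and that the 1-indexed ranks from \textsc{Unranking} agree with the 1-indexed $j$. Once these conventions are matched, the rest of the argument is direct bookkeeping.
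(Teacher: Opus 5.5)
Your proof is correct and follows the paper's argument essentially verbatim: induction on $n$, locating $x$ in block $i=\lfloor (x-1)/M\rfloor+1$ at offset $j$, recovering $G_i$ from the induction hypothesis, and recovering $S$ by applying \textsc{Unranking} to $j$ or to $|\mathcal{B}_{n-1,k}|-j+1$ according to the parity of $i$. The convention check you flag at the end is immediate: $d$ starts at $0$ and is toggled once per base graph in \textsc{GenWOD}, so the forward direction is used exactly for odd $i$.
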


\begin{proof}
We prove by induction on $n$. For $n=1$ the only graph is the trivial graph, and the algorithm returns it. Now assume correct reconstruction for $n-1$. In $\mathbf{C}_k(n)$ each base graph occupies $M = \sum_{t=0}^{k} \binom{n-1}{t}$ consecutive positions, so the index $x$ determines the base graph index $i = \left\lfloor \frac{x-1}{M} \right\rfloor + 1$ and the offset $j = x - (i-1)\cdot M$ within the block. By the induction hypothesis, the recursive call reconstructs the graph with rank $i$ in $\mathbf{C}_k(n-1)$, and since the subsets are listed in forward pruned BRGC order when $i$ is odd and in reverse order when $i$ is even, Algorithm~\textsc{Unranking} applied to $j$ in the forward order, or to $|\mathcal{B}_{n-1,k}| - j + 1$ in the reverse case, reconstructs the correct neighbor set $S$. Combining $G'$ and $S$ yields the unique graph with rank $x$ in $\mathbf{C}_k(n)$.
\end{proof}

\begin{theorem}\label{cgunrank}
The algorithm~\textsc{GraphUnranking} reconstructs well-ordered $k$-degenerate graphs on $n$ vertices in $O(n^2)$ time and uses $O(n^2)$ space.
\end{theorem}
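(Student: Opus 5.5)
The proof mirrors the analysis of Theorem~\ref{cgrank}: account for the cost of each recursion level of \textsc{GraphUnranking}, then sum over the $n$ levels. As a preliminary step, precompute the table $C(m,K)=\sum_{t=0}^{K}\binom{m}{t}$ for all $m\le n$ and $K\le k$ once, in $O(nk)\subseteq O(n^2)$ time and space as in Theorem~\ref{cunrank}. All levels share this table. It also supplies each block size $M=C(m-1,k)=|\mathcal{B}_{m-1,k}|$ by a single lookup, so no binomial sums are recomputed per level.

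Next, fix the level at which the current graph has $m$ vertices. At this level the algorithm does the following.
\begin{itemize}
\item It reads $M$ from the table.
\item It computes $i=\lfloor (x-1)/M\rfloor+1$ and $j=x-(i-1)M$ with a constant number of arithmetic operations.
\item It performs the parity test on $i$ and, in the even case, forms the index $|\mathcal{B}_{m-1,k}|-j+1$, all in constant time.
\item It calls \textsc{Unranking} on length $m-1$, which costs $O(m-1)$ by Theorem~\ref{cunrank}.
\item It writes the neighbor set $S$ into row and column $m$ of a single shared $n\times n$ adjacency matrix, which costs $O(m)$.
\end{itemize}
The graph $G'$ returned by the recursive call is exactly this same matrix restricted to its first $m-1$ rows. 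Each level therefore only appends one row of $O(m)$ entries and never copies the graph. The level cost is $O(m)$, and summing over $m=1,\dots,n$ gives $O(n^2)$ total time.

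For space, the shared adjacency matrix takes $O(n^2)$ and the table takes $O(nk)\le O(n^2)$. The recursion has depth $n$ with $O(1)$ parameters per frame, namely $x$, $i$, $j$, and $M$. The temporary string $b$ of length at most $n-1$ can be reused across levels, so it adds only $O(n)$. The total space is $O(n^2)$.

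The step that needs care is the implementation claim that forming $E(G')\cup\{(v,n)\mid v\in S\}$ costs $O(m)$ rather than $O(m^2)$. A naive reading of the pseudocode would return a fresh graph at each level and incur $O(m^2)$ copying per level, which would give $O(n^3)$ overall. I would handle this by stating explicitly that the recursion fills a single global matrix in place, as in the proof of Theorem~\ref{cgrank}. A second, minor point is the word-size convention: the ranks $x$ can reach $\lambda(n,k)$, which has $O(nk\log n)$ bits. Like the paper, I would assume that arithmetic on these integers costs $O(1)$ in the unit-cost RAM model.
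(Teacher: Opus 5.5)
Your proposal is correct and follows essentially the same argument as the paper: constant-time index arithmetic, an $O(m-1)$ call to \textsc{Unranking} using the once-precomputed shared table, and $O(m)$ work to write row $m$ of a single adjacency matrix give $O(m)$ per level, which sums to $O(n^2)$, and the one $n\times n$ matrix with $O(1)$ parameters per level gives $O(n^2)$ space. Your remarks on filling the matrix in place rather than copying it at each level, on reading $M$ from the table, and on the unit-cost arithmetic assumption spell out what the paper's proof leaves implicit.
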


\begin{proof}
There are $n$ recursion levels, one per vertex from $n$ down to $1$. Consider level $m$, where the graph has $m$ vertices. Its cost consists of the following parts: computing the quotient $i$ and the remainder $j$ from $x$ costs $O(1)$; the call to Algorithm~\textsc{Unranking} on a binary string of length $m-1$ costs $O(m-1)$ by Theorem~\ref{cunrank}; and the reconstruction of $S$ from the string together with the union of the edge sets costs $O(m)$. The table used by Algorithm~\textsc{Unranking} is precomputed once and shared by all levels, at a total cost of $O(nk)\le O(n^2)$, which is dominated by the bound. Thus level $m$ costs $O(m)$, and summing over all levels gives $\sum_{m=1}^{n} O(m) = O(n^2)$.

The algorithm constructs a single $n\times n$ adjacency matrix, contributing $O(n^2)$ space, and the recursion passes only a constant number of parameters between levels, so no additional copies of the graph are ever created. Hence the total space complexity is $O(n^2)$.
\end{proof}

We now illustrate Algorithm~\textsc{GraphUnranking} on the same parameter set $n=3$, $k=2$ with rank $r(G)=7$.

\begin{example}
Let $n=3$, $k=2$, and rank $r(G)=7$. Reconstruct the graph $G$:

\begin{itemize}
    \item Compute $M = \sum_{t=0}^{2} \binom{2}{t} = 4$.
    \item Decompose index: $i = \lfloor (7-1)/4 \rfloor + 1 = 2$, $j = 7 - (2-1)\cdot 4 = 3$.
    \item Base graph $G' = (\{1,2\}, \{(1,2)\}) \in \mathbf{C}_2(2)$.
    \item Neighbor set of vertex $3$: $S$. Since $i=2$ is even, use the reverse Gray code of subsets of $\{1,2\}$ of size $\le 2$ in the correct order: $\{2\}, \{1,2\}, \{1\}, \varnothing$, so $S = \{1\}$ is at position $j=3$.
    \item Construct edges: $E = \{(1,2)\} \cup \{(1,3)\} = \{(1,2),(1,3)\}$.
    \item Return graph: $G = (\{1,2,3\}, E)$.
\end{itemize}
\end{example}

\section*{Acknowledgements}
This research is supported by the Macao Polytechnic University research grant (Project code: RP/FCA-01/2026 and fca.ec0b.90da.3).

\section*{Data availability}
No data was used for the research described in the article.

\newpage
\bibliographystyle{elsarticle-num}
\bibliography{references/myrefs}

\newpage
\appendix
\section{Python code to generate well-ordered $k$-degenerate graphs}\label{sec:pyccode}
{
\tiny
\lstset{style=mystyle}
\begin{lstlisting}[language=python]
def gen_brgc_k(n, k, forward=True, toggle=None):
    bits = [0] * n
    weight = 0

    def visit(i, fwd):
        nonlocal weight
        if i == -1 or weight == k:
            yield tuple(bits)
            return
        if fwd:
            yield from visit(i - 1, True)
            bits[i] = 1
            weight += 1
            if toggle:
                toggle(i, 1)
            yield from visit(i - 1, False)
            bits[i] = 0
            weight -= 1
            if toggle:
                toggle(i, 0)
        else:
            bits[i] = 1
            weight += 1
            if toggle:
                toggle(i, 1)
            yield from visit(i - 1, True)
            bits[i] = 0
            weight -= 1
            if toggle:
                toggle(i, 0)
            yield from visit(i - 1, False)

    yield from visit(n - 1, forward)


def GENWOD(n, k, visit=lambda A: None):
    A = [[0] * n for _ in range(n)]

    def rec(m):
        if m == 1:
            yield None
            return
        d = 0

        def toggle(i, v):
            A[m - 1][i] = A[i][m - 1] = v

        for _ in rec(m - 1):
            yield from gen_brgc_k(m - 1, k, forward=(d == 0), toggle=toggle)
            d = 1 - d

    for _ in rec(n):
        visit(A)


def format_print_matrix(matrix):
    for row in matrix:
        print("    " + "  ".join(str(val) for val in row))


def main():
    print("=== Well-ordered k-degenerate Graph Generator ===")
    try:
        n_input = int(input("Please enter the number of vertices n: "))
        k_input = int(input("Please enter the maximum degree k: "))
    except ValueError:
        print("Invalid input! Please enter valid integers.")
        return

    print("\nGenerating, please wait...")
    total = [0]

    def snapshot(A):
        total[0] += 1
        print(f"\n[Graph {total[0]}]")
        format_print_matrix(A)

    GENWOD(n_input, k_input, snapshot)

    print("-" * 45)
    print(f"Generation complete! n={n_input}, k={k_input}")
    print(f"Total graphs generated: {total[0]}")
    print("-" * 45)


if __name__ == "__main__":
    main()
\end{lstlisting}
}

\clearpage
\section{Python code to rank and unrank binary strings of length $n$ with weight $\le k$ in Gray code order}
{
\tiny
\lstset{style=mystyle}
\begin{lstlisting}[language=python]
import math

def build_table(n, k):
    C = [[0] * (k + 1) for _ in range(n + 1)]
    for m in range(n + 1):
        C[m][0] = 1
        for K in range(1, k + 1):
            C[m][K] = C[m][K - 1] + (math.comb(m, K) if K <= m else 0)
    return C

def count(C, m, K):
    if K < 0:
        return 0
    if K > len(C[0]) - 1:
        K = len(C[0]) - 1
    return C[m][K]

def ranking(n, k, s):
    C = build_table(n, k)
    r = 1
    d = 1
    for s_i in reversed(s):
        n -= 1
        r += count(C, n, k - 1 + s_i) if s_i == d % 2 else 0
        d += s_i
        k -= s_i
    return r

def unranking(n, k, r):
    C = build_table(n, k)
    s = [0] * n
    d = 1
    r -= 1
    for i in range(n - 1, -1, -1):
        if r >= count(C, i, k - 1 + d % 2):
            s[i] = d % 2
            r -= count(C, i, k - 1 + d % 2)
        else:
            s[i] = 1 - d % 2
        k -= s[i]
        d += s[i]
    return s

mode = input("Enter 1 for Ranking, 0 for Unranking: ").strip()

if mode == "1":
    n = int(input("Enter binary string length n: ").strip())
    k = int(input("Enter weight limit k: ").strip())
    s_str = input("Enter binary string (length n): ").strip()
    if len(s_str) != n or any(c not in '01' for c in s_str):
        print("Invalid string length or format!")
    else:
        s = [int(c) for c in s_str]
        r = ranking(n, k, s)
        print("Ranking result: ", r)
elif mode == "0":
    n = int(input("Enter binary string length n: ").strip())
    k = int(input("Enter weight limit k: ").strip())
    r = int(input("Enter index r (starting from 1): ").strip())
    C = build_table(n, k)
    total = count(C, n, k)
    if not (1 <= r <= total):
        print(f"Index r out of range! Valid range: 1 ~ {total}")
    else:
        s = unranking(n, k, r)
        s_str = "".join(str(x) for x in s)
        print("Unranking result: ", s_str)
else:
    print("Invalid input, please enter 1 or 0")


\end{lstlisting}
}

\clearpage
\section{Python code for ranking well-ordered $k$-degenerate graphs}\label{sec:pyccode}
{
\tiny
\lstset{style=mystyle}
\begin{lstlisting}[language=python]
import math

def build_table(n, k):
    C = [[0] * (k + 1) for _ in range(n + 1)]
    for m in range(n + 1):
        C[m][0] = 1
        for K in range(1, k + 1):
            C[m][K] = C[m][K - 1] + (math.comb(m, K) if K <= m else 0)
    return C

def count(C, m, K):
    if K < 0:
        return 0
    if K > len(C[0]) - 1:
        K = len(C[0]) - 1
    return C[m][K]

def ranking(C, k, s):
    n = len(s)
    r = 1
    d = 1
    for s_i in reversed(s):
        n -= 1
        r += count(C, n, k - 1 + s_i) if s_i == d % 2 else 0
        d += s_i
        k -= s_i
    return r

def graph_ranking(G, n, k, C=None):
    if C is None:
        C = build_table(n, k)
    if n == 1:
        return 1
    s = {i for i, val in enumerate(G[n-1][:n-1]) if val == 1}
    for i in s:
        G[i][n-1] = G[n-1][i] = 0
    s = [1 if i in s else 0 for i in range(n-1)]
    i = graph_ranking(G, n-1, k, C)
    if i % 2 != 0:
        j = ranking(C, k, s)
    else:
        j = count(C, n-1, k) - ranking(C, k, s) + 1
    m = count(C, n-1, k)
    return (i-1) * m + j

def main():
    print("=== Graph Ranking Input/Output ===")
    try:
        n = int(input("Enter n (number of vertices): "))
        k = int(input("Enter k (degree limit): "))
        print(f"Enter the {n}x{n} adjacency matrix row by row (space-separated numbers):")
        G = []
        for _ in range(n):
            row_input = input().strip().split()
            row = [int(x) for x in row_input]
            if len(row) != n:
                print(f"Error: Expected {n} elements in the row, got {len(row)}.")
                return
            G.append(row)
        rank = graph_ranking(G, n, k)
        print("\n--------------------------------")
        print(f"Calculated Rank: {rank}")
        print("--------------------------------")
    except ValueError:
        print("Invalid input. Please enter valid integers.")
    except Exception as e:
        print(f"An unexpected error occurred: {e}")

if __name__ == "__main__":
    main()
\end{lstlisting}
}

\clearpage
\section{Python code for unranking well-ordered $k$-degenerate graphs}\label{sec:pyccode}
{
\tiny
\lstset{style=mystyle}
\begin{lstlisting}[language=python]
import math

def build_table(n, k):
    C = [[0] * (k + 1) for _ in range(n + 1)]
    for m in range(n + 1):
        C[m][0] = 1
        for K in range(1, k + 1):
            C[m][K] = C[m][K - 1] + (math.comb(m, K) if K <= m else 0)
    return C

def count(C, m, K):
    if K < 0:
        return 0
    if K > len(C[0]) - 1:
        K = len(C[0]) - 1
    return C[m][K]

def unranking(C, n, k, r):
    s = [0] * n
    d = 1
    r -= 1
    for i in range(n - 1, -1, -1):
        if r >= count(C, i, k - 1 + d % 2):
            s[i] = d % 2
            r -= count(C, i, k - 1 + d % 2)
        else:
            s[i] = 1 - d % 2
        k -= s[i]
        d += s[i]
    return s

def graph_unranking(x, n, k, G=None, C=None):
    if C is None:
        C = build_table(n, k)
    if G is None:
        G = [[0] * n for _ in range(n)]
    if n == 1:
        return G
    m = count(C, n - 1, k)
    i = (x - 1) // m + 1
    j = x - (i - 1) * m
    G = graph_unranking(i, n - 1, k, G, C)
    if i % 2 != 0:
        s = unranking(C, n - 1, k, j)
    else:
        s = unranking(C, n - 1, k, count(C, n - 1, k) - j + 1)
    for idx, bit in enumerate(s):
        if bit == 1:
            G[idx][n - 1] = 1
            G[n - 1][idx] = 1
    return G

def format_matrix(matrix):
    if not matrix:
        return "[]"
    formatted_rows = []
    for row in matrix:
        formatted_rows.append("  " + " ".join(str(val) for val in row))
    return "\n".join(formatted_rows)

def main():
    print("=== Graph Unranking Input/Output ===")
    try:
        n = int(input("Enter n (number of vertices): "))
        k = int(input("Enter k (degree limit): "))
        rank = int(input("Enter the target Rank: "))
        matrix = graph_unranking(rank, n, k)
        print("\n--------------------------------")
        print("Generated Graph Matrix:")
        print(format_matrix(matrix))
        print("--------------------------------")
    except ValueError:
        print("Invalid input. Please enter valid integers.")
    except Exception as e:
        print(f"An unexpected error occurred: {e}")

if __name__ == "__main__":
    main()
\end{lstlisting}
}



\end{document}